\documentclass[twocolumn]{autart}    

\usepackage[T1]{fontenc}
\usepackage[acronym]{glossaries}
\usepackage{amssymb}
\usepackage{subcaption}
\usepackage{tikz}
\usepackage{pgfplots}
\pgfplotsset{compat=1.18}
\usepgfplotslibrary{groupplots}
\usetikzlibrary{patterns}

\usepackage{graphicx}          

\newtheorem{problem}{Problem}
\newtheorem{definition}{Definition}
\newtheorem{theorem}{Theorem}
\newtheorem{lemma}{Lemma}
\newtheorem{remark}{Remark}
\newtheorem{assumption}{Assumption}
\newtheorem{corollary}{Corollary}

\newacronym{lqg}{LQG}{linear--quadratic Gaussian}
\newacronym{lq}{LQ}{linear--quadratic}
\newacronym{isdg}{ISDG}{Inverse Stochastic Differential Game}
\newacronym{irl}{IRL}{Inverse Reinforcement Learning}
\newacronym{hjb}{HJB}{Hamilton-Jacobi-Bellman}
\newacronym{pmp}{PMP}{Pontryagin Minimum Principle}

\begin{document}

\begin{frontmatter}

\title{Infinite-Horizon Inverse Linear--Quadratic Differential Games with State- and Control-Dependent Noise\thanksref{footnoteinfo}} 

\thanks[footnoteinfo]{This paper was not presented at any conference. Corresponding author Lucas G\"unther.} 

\author[KIT]{Lucas G\"unther}\ead{lucas.guenther@kit.edu},
\author[KIT]{Karl Handwerker}\ead{karl.handwerker@kit.edu},            
\author[KIT]{Felix Th\"ommes}\ead{felix.thoemmes@kit.edu},
\author[KIT]{Balint Varga}\ead{balint.varga2@kit.edu},
\author[KIT]{S\"oren Hohmann}\ead{soeren.hohmann@kit.edu}

\address[KIT]{Institute of Control Systems, Karlsruhe Institute of Technology (KIT), Kaiserstr. 12, 76131 Karlsruhe, Germany}

\begin{keyword}                           
Differential Games; Inverse Stochastic Differential Games; Inverse Reinforcement Learning; Identification.               
\end{keyword}                             

\begin{abstract}                          
This paper presents a method to solve the inverse problem for $N$-player infinite-horizon linear--quadratic (LQ) differential games with state- and control-dependent noise. For this stochastic setting, we derive necessary and sufficient conditions for linear feedback Nash equilibria, which take the form of coupled stochastic algebraic Riccati equations. We then derive a kernel representation of these equations to explicitly characterize the set of all cost function parameter combinations across players that are consistent with observed equilibrium trajectories, thereby solving the associated inverse problem. Numerical results illustrate the approach and confirm the theoretical findings, highlighting the inherent ambiguity of the inverse problem.
\end{abstract}

\end{frontmatter}

\section{Introduction}\label{sec:introduction}
Non-cooperative stochastic differential games provide a mathematical framework for modeling dynamic interactions among multiple decision-makers operating under uncertainty \cite{Moo:26}. As a stochastic extension of deterministic differential games \cite{Isa:65}, they arise in a wide range of application domains, including financial markets \cite{Car:16}, economics \cite{Bas:86}, engineering \cite{For:15}, and human–machine interaction \cite{Kil:24}. In many of these applications, uncertainty enters the system through state- and control-dependent noise, for example via stochastic volatility in financial models \cite{Car:16} or human variability in human–machine interaction \cite{TodJor:02}.
The theoretical analysis of stochastic differential games has been studied in the literature \cite{BucLi:08}, \cite{BasOls:99}, \cite{Sun:19}, \cite{Zhu:13}. In this context, the computation of Nash equilibria, also referred to as the forward problem, remains an active area of research and typically assumes that the players’ cost functions are known, from which the resulting equilibrium strategies and trajectories are derived \cite{Kus:02}, \cite{Han:26}.
However, in practical applications, these cost functions are often not directly accessible, leading to the inverse problem of inferring the players' underlying cost functions from observed equilibrium trajectories, commonly referred to as ground truth trajectories.

To date, inverse methods have been studied predominantly in the single-player setting, in particular within the frameworks of Inverse Optimal Control \cite{Kar:24} or \gls{irl} \cite{Zie:08}. These approaches, however, do not capture strategic interactions between multiple players. Extensions to multi-player settings have been considered in the context of multi-agent \gls{irl}. Early contributions include the works of \cite{Nat:10} on cooperative settings and \cite{Red:12} on non-cooperative settings. Subsequent developments have focused on extending maximum entropy \gls{irl} \cite{Zie:08} to multi-player systems, resulting in iterative and approximate solution methods \cite{Meh:23}, \cite{Neu:21}. In parallel, learning-based approaches have been proposed that leverage deep learning techniques, including adversarial formulations \cite{Yu:19}, Q-learning-based methods \cite{Hay:25}, and autoencoder architectures \cite{Sun:25}. While these multi-agent \gls{irl} approaches consider inverse problems with multiple interacting players, they are formulated in discrete-time settings, typically as Markov or dynamic games, and therefore do not apply to continuous-time stochastic differential games.

In continuous-time multi-player settings, inverse problems have been studied in the context of deterministic differential games. Within this line of work, two main methodological approaches can be distinguished: those based on the \gls{pmp} \cite{Mol:22}, and those based on coupled \gls{hjb} equations \cite{Lia:22}. In the deterministic \gls{lq} setting, linear system dynamics and quadratic cost functions imply that both \gls{pmp} and coupled \gls{hjb} equations lead to coupled Riccati differential equations as equilibrium conditions \cite[Corollary 6.5]{BasOls:99}. If an infinite-horizon deterministic \gls{lq} differential game is considered, coupled algebraic Riccati equations arise, which constitute necessary and sufficient conditions for linear feedback Nash equilibria \cite{Eng:00}. Exploiting this property, the inverse methods proposed in \cite{Ing:19}, \cite{MarCao:24} utilize the equilibrium conditions to infer the underlying cost function parameters from observed trajectories or feedback strategies.

In $N$-player non-cooperative stochastic differential games, equilibrium conditions are characterized by $N$ coupled stochastic \gls{hjb} equations. To date, \gls{isdg} methods, which address the corresponding inverse problem, have only been developed for the \gls{lqg} setting. In this case, due to additive Gaussian noise, the equilibrium conditions coincide with those of the deterministic \gls{lq} case \cite[Corollary 6.12]{BasOls:99}. This equivalence allows the corresponding coupled deterministic Riccati differential or algebraic Riccati equations to be utilized within \gls{isdg} methods for finite-horizon \cite{Gue:26}, \cite{Ren:26}, \cite{Li:26} and infinite-horizon \cite{Che:24a}, \cite{Che:25} \gls{lqg} differential games. However, when state- and in particular control-dependent noise is introduced, the equilibrium conditions no longer coincide with their deterministic counterparts, and consequently none of the existing approaches is applicable. For $N$-player infinite-horizon stochastic differential games with state- and control-dependent noise, the coupled stochastic algebraic Riccati equations characterizing linear feedback Nash equilibria were formulated in \cite{Han:26}. However, establishing that these equations are necessary and sufficient conditions for linear feedback Nash equilibria, a prerequisite for employing them within inverse methods, remains an open problem.

Consequently, inverse methods for \gls{lq} differential games with state- and control-dependent noise have not yet been developed. The infinite-horizon setting is particularly promising, as it leads to algebraic equilibrium conditions that can be exploited in inverse problem formulations. To address this gap and solve the inverse problem for infinite-horizon \gls{lq} differential games with state- and control-dependent noise for the first time, this paper makes three main contributions:
\begin{itemize}
    \item Establish the coupled stochastic algebraic Riccati equations as necessary and sufficient conditions for linear feedback Nash equilibria.
    \item Derive a kernel representation of these equilibrium conditions as per-player linear systems of equations in that player's cost function parameters.
    \item Characterize the complete set of cost function parameter combinations across all players that are consistent with observed equilibrium trajectories.
\end{itemize}
\section{Problem Formulation}
\label{sec:problem_formulation}
Let $(\Omega,\mathcal{F},\mathbb{P})$ be a probability space and let $\{\mathcal{F}_t\}_{t\geq 0}$ be a right-continuous filtration such that $\mathcal{F}_0$ contains all $\mathbb{P}$-null sets of $\mathcal{F}$ \cite[Definition~1.5]{Nis:15}. Let $\omega(\cdot)$ be a $\{\mathcal{F}_t\}_{t\geq 0}$-adapted one-dimensional standard Wiener process. Let the system state $\boldsymbol{x}(t) \in \mathbb{R}^n$ be assumed to be $\{\mathcal{F}_t\}_{t\geq 0}$-adapted with deterministic initial condition $\boldsymbol{x}(0)=\boldsymbol{x}_0 \in \mathbb{R}^n$. Let $\boldsymbol{u}_i(t) \in \mathbb{R}^{m_i}$ denote the control input of player $i \in \mathcal{I} := \{1,\dots,N\} \subset \mathbb{N}_+$. Let the linear stochastic system dynamics with state- and control-dependent noise be given by
\begin{equation}\label{eq:system}
\begin{aligned}
    d\boldsymbol{x}(t) &= \bigg(\boldsymbol{A} \boldsymbol{x}(t) + \sum_{j\in\mathcal{I}} \boldsymbol{B}_j \boldsymbol{u}_j(t)\bigg) dt \\
    &+ \bigg(\boldsymbol{C} \boldsymbol{x}(t) + \sum_{j\in\mathcal{I}} \boldsymbol{D}_j \boldsymbol{u}_j(t)\bigg) d\omega(t),
\end{aligned}
\end{equation}
where $\boldsymbol{A} \in \mathbb{R}^{n\times n}$, $\boldsymbol{C} \in \mathbb{R}^{n\times n}$, and $\boldsymbol{B}_i \in \mathbb{R}^{n\times m_i}$, $\boldsymbol{D}_i \in \mathbb{R}^{n\times m_i}$ for all $i \in \mathcal{I}$.
We consider that the players select their controls using constant linear feedback strategies
\begin{equation}\label{eq:feedback_strategy}
    \boldsymbol{u}_i(t) =  - \boldsymbol{K}_i\boldsymbol{x}(t),\quad \forall i \in \mathcal{I},
\end{equation}
with $\boldsymbol{K}_i\in \mathbb{R}^{m_i \times n}$ such that $\boldsymbol{K} := \left(\boldsymbol{K}_1, \dots,\boldsymbol{K}_N\right)$ denotes the $N$-tuple of feedback strategies. Using \eqref{eq:system} and \eqref{eq:feedback_strategy}, we define the closed-loop drift and diffusion matrices
\begin{equation}\label{eq:closed_loop_drift_diffusion}
     \boldsymbol{A}_{cl}(\boldsymbol{K}) := \boldsymbol{A}-\sum_{j\in\mathcal{I}}\boldsymbol{B}_j\boldsymbol{K}_j, \quad \boldsymbol{C}_{cl}(\boldsymbol{K}) := \boldsymbol{C}-\sum_{j\in\mathcal{I}}\boldsymbol{D}_j\boldsymbol{K}_j,
\end{equation}
so that \eqref{eq:system} can be written in closed-loop form as
\begin{equation}\label{eq:closed_loop_system}
    d\boldsymbol{x}(t) = \boldsymbol{A}_{cl}(\boldsymbol{K})\boldsymbol{x}(t) dt
    + \boldsymbol{C}_{cl}(\boldsymbol{K})\boldsymbol{x}(t) d\omega(t).
\end{equation}
\begin{definition}[\cite{Dam:04}, Definition~1.5.1]
    The closed-loop system \eqref{eq:closed_loop_system} is mean-square stable if
    \begin{equation}\label{eq:mean_square_stability}
        \lim_{t \to \infty} \mathbb{E}\left[\|\boldsymbol{x}(t)\|^2\right] = 0, \quad \forall \boldsymbol{x}_0 \in \mathbb{R}^n.
    \end{equation}
    The open-loop system \eqref{eq:system} is mean-square stabilizable if there exists an $N$-tuple $\boldsymbol{K}$ such that \eqref{eq:closed_loop_system} is mean-square stable.
\end{definition}
We denote the set of stabilizing constant linear feedback strategies by
\begin{equation}
    \mathcal{K} := \bigg\{\boldsymbol{K} \in\prod_{i\in\mathcal{I}}\mathbb{R}^{m_i\times n} \bigg| \eqref{eq:closed_loop_system}\text{ is mean-square stable}\bigg\}.
\end{equation}
\begin{assumption}\label{ass:mss}
The system \eqref{eq:system} is mean-square stabilizable, and we restrict our attention to $\boldsymbol{K}\in \mathcal{K}$.
\end{assumption}
We consider non-cooperative stochastic differential games, where each player $i \in \mathcal{I}$ aims to minimize an individual infinite-horizon cost function
\begin{equation}\label{eq:cost_function}
    J_i = \mathbb{E}\bigg[\int_{0}^{\infty}\boldsymbol{x}(t)^\top\boldsymbol{Q}_i\boldsymbol{x}(t) + \sum_{j\in\mathcal{I}} \boldsymbol{u}_j(t)^\top\boldsymbol{R}_{ij}\boldsymbol{u}_j(t)\,dt\bigg],
\end{equation}
subject to \eqref{eq:system}.  Let $\boldsymbol{\theta}_i \in \mathbb{R}^{p_i}$, with $p_i = n^2 + \sum_{j\in\mathcal{I}} m_j^2$, denote the vectorization of the matrices in \eqref{eq:cost_function}, defined as
    \begin{equation}\label{eq:theta_i}
        \boldsymbol{\theta}_i :=
        \big[
            \operatorname{vec}(\boldsymbol{Q}_i)^\top\quad
            \operatorname{vec}(\boldsymbol{R}_{i1})^\top\;\;\cdots\;\;
            \operatorname{vec}(\boldsymbol{R}_{iN})^\top
        \big]^\top,       
    \end{equation}
    where $\operatorname{vec}(\boldsymbol{X})$ denotes the column-wise vectorization of a matrix $\boldsymbol{X}$.
\begin{assumption}\label{ass:admissible_set}
    Let the cost function parameters $\boldsymbol{\theta}_i,\,\forall i \in \mathcal{I}$, $\forall j \in \mathcal{I} \setminus \{i\}$, lie within the admissible set
    \begin{equation}
        \mathcal{A}_i:=\{\boldsymbol{\theta}_i\,\mid\,\boldsymbol{Q}_i \in \mathbb{S}^n_{++},\,\boldsymbol{R}_{ii} \in \mathbb{S}^{m_i}_{++},\,\boldsymbol{R}_{ij} \in \mathbb{S}^{m_j}_{+}\},
    \end{equation}
    where $\boldsymbol{X}\in \mathbb{S}^n_{++}$ $(\boldsymbol{X}\in \mathbb{S}^n_{+})$ denotes that $\boldsymbol{X}$ is a positive definite (positive semidefinite) symmetric $n \times n$ matrix.
\end{assumption}
We write $J_i$ as $J_i(\boldsymbol{x}_0,\boldsymbol{K}_i, \boldsymbol{K}_{\neg i})$, where $\boldsymbol{K}_{\neg i} := ( \boldsymbol{K}_j \mid  j  \in  \mathcal{I}\setminus\{i\})$, as a function of feedback strategies and the initial state, since they generate the state and control trajectories $\boldsymbol{x}(t)$ and $\boldsymbol{u}_i(t),\,\forall i \in \mathcal{I}$, through \eqref{eq:system} and \eqref{eq:feedback_strategy}.
\begin{definition}[\cite{Eng:00}, Definition~1]\label{def:nash_equilibrium}
    An $N$-tuple of feedback strategies $\boldsymbol{K}^\star \in \mathcal{K}$ is called a stabilizing linear feedback Nash equilibrium if 
    \begin{equation}\label{eq:feedback_nash_equilibrium}
    J_i(\boldsymbol{x}_0,\boldsymbol{K}_i^\star, \boldsymbol{K}_{\neg i}^\star) \leq J_i(\boldsymbol{x}_0,\boldsymbol{K}_i, \boldsymbol{K}_{\neg i}^\star), \quad \forall \boldsymbol{x}_0 \in \mathbb{R}^n,
    \end{equation}
    holds for all $i \in \mathcal{I}$ such that $(\boldsymbol{K}_i,\boldsymbol{K}_{\neg i}^\star) \in \mathcal{K}$.
\end{definition}
\begin{definition}\label{def:solution_set}
Given a linear feedback Nash equilibrium $\boldsymbol{K}^\star$, the solution set of the inverse \gls{lq} differential game with state- and control-dependent noise is the set $\boldsymbol{\Theta}$ of all joint parameter vectors
$\hat{\boldsymbol{\theta}}=(\hat{\boldsymbol{\theta}}_1,\dots,\hat{\boldsymbol{\theta}}_N)$
such that, for each $\hat{\boldsymbol{\theta}} \in \boldsymbol{\Theta}$, every
$\hat{\boldsymbol{\theta}}_i$, $i \in \mathcal{I}$, parameterizes the cost function
\eqref{eq:cost_function} of player $i$ and satisfies
Assumption~\ref{ass:admissible_set}, and the induced \gls{lq} differential game parameterized by $\hat{\boldsymbol{\theta}}$ admits
$\boldsymbol{K}^\star$ as a linear feedback Nash equilibrium.
\end{definition}
Let $T_N > 0$ be a finite observation horizon and let $D \in \mathbb{N}_+$ independent demonstrations of the trajectories $\boldsymbol{x}^{(d)}(t)$ and $\boldsymbol{u}_i^{(d)}(t)$, $i \in \mathcal{I}$ and $d \in \{1,\dots,D\}$, be given.
\begin{assumption}\label{ass:nash_trajectories}
Let the demonstrations be generated through \eqref{eq:closed_loop_system} under a fixed linear feedback Nash
equilibrium $\boldsymbol{K}^\star$, as in Definition~\ref{def:nash_equilibrium}, of the infinite-horizon game \eqref{eq:cost_function} with $\boldsymbol{\theta}^\star=(\boldsymbol{\theta}_1^\star, \dots, \boldsymbol{\theta}_N^\star)$, such that the ground truth trajectories satisfy
\begin{equation}\label{eq:nash_trajectories}
  \boldsymbol{u}_i^{\star(d)}(t) = -\boldsymbol{K}_i^\star \boldsymbol{x}^{\star(d)}(t), 
  \quad \forall t \in [0,T_N],\ \forall i \in \mathcal{I},
\end{equation}
almost surely.
\end{assumption}
For the given demonstrations as in Assumption~\ref{ass:nash_trajectories}, define
\begin{equation}\label{eq:data_matrices}
\begin{aligned}
\overline{\boldsymbol{X}}
&:=
\int_0^{T_N} \frac{1}{D}\sum_{d=1}^D 
\boldsymbol{x}^{\star(d)}(t)(\boldsymbol{x}^{\star(d)}(t))^\top dt,\\
\overline{\boldsymbol{U}}_i
&:=
\int_0^{T_N} \frac{1}{D}\sum_{d=1}^D 
\boldsymbol{u}_i^{\star(d)}(t)(\boldsymbol{x}^{\star(d)}(t))^\top dt,
\quad\forall i \in \mathcal{I}.
\end{aligned}
\end{equation}
Under Assumption~\ref{ass:mss}, \eqref{eq:closed_loop_system} is asymptotically mean-square stable, which in this setting is equivalent to exponential mean-square stability \cite[Theorem~1.5.3]{Dam:04}. Therefore, the integrals in \eqref{eq:data_matrices} are well-defined and finite almost surely.
\begin{assumption}\label{ass:persistence_excitation}
Let there exist a constant $\alpha>0$ such that $\overline{\boldsymbol{X}} \succeq \alpha \boldsymbol{I}_n$ almost surely, ensuring persistence of excitation.
\end{assumption}
\begin{definition}\label{def:trajectory_matching}
Let $\hat{\boldsymbol{\theta}}$ be a joint parameter vector as in Definition~\ref{def:solution_set}.
Let $\hat{\boldsymbol{x}}^{(d)}(t)$ and $\hat{\boldsymbol{u}}_i^{(d)}(t)$, $\forall i \in \mathcal{I}$,
denote trajectories corresponding to one of the linear feedback Nash equilibria induced by $\hat{\boldsymbol{\theta}}$
for the $d$-th realization of the underlying Wiener process $\omega(\cdot)$
and the initial condition $\boldsymbol{x}_0$ through \eqref{eq:closed_loop_system}. The trajectories are said to match the ground truth trajectories if, for each $d \in \{1,\dots,D\}$ and for all $t \in [0,T_N]$, it holds almost surely that
\begin{equation}
\boldsymbol{x}^{\star(d)}(t) = \hat{\boldsymbol{x}}^{(d)}(t), \quad
\boldsymbol{u}_i^{\star(d)}(t) = \hat{\boldsymbol{u}}_i^{(d)}(t),\quad \forall i \in \mathcal{I}.
\end{equation}
\end{definition}
\begin{problem}\label{prob:isdg}
    Let Assumptions~\ref{ass:mss},~\ref{ass:admissible_set},~\ref{ass:nash_trajectories}, and~\ref{ass:persistence_excitation} hold. Furthermore, let the system matrices $\boldsymbol{A},\boldsymbol{C}$, and $\boldsymbol{B}_i,\boldsymbol{D}_i,\, \forall i \in \mathcal{I}$, be given. Determine the solution set $\boldsymbol{\Theta}$ according to Definition~\ref{def:solution_set} such that, for every $\hat{\boldsymbol{\theta}} \in \boldsymbol{\Theta}$, there exist trajectories satisfying Definition~\ref{def:trajectory_matching}.
\end{problem}
\section{Inverse Stochastic Differential Games}\label{sec:isdg}
To address Problem~\ref{prob:isdg}, we introduce a novel inverse method for \gls{lq} differential games with state- and control-dependent noise. The approach proceeds in four steps. First, we construct a least-squares estimator to recover the underlying linear feedback strategies $\boldsymbol{K}^\star$ from observed ground truth trajectories. Second, we establish that the coupled stochastic algebraic Riccati equations constitute necessary and sufficient conditions for linear feedback Nash equilibria. Third, we derive a kernel representation of these equilibrium conditions, for each player $i \in \mathcal{I}$, as linear systems of equations in that player's cost function parameters $\boldsymbol{\theta}_i$. Finally, we establish the main result: an explicit characterization of the solution set $\boldsymbol{\Theta}$ as the Cartesian product of the player-wise solution sets of these per-player linear systems.
\subsection{Recovery of Equilibrium Strategies}
\begin{lemma}\label{lem:identification_strategies}
Let Assumptions~\ref{ass:mss}, \ref{ass:nash_trajectories}, and~\ref{ass:persistence_excitation} hold. Then, for each $i \in \mathcal{I}$, the least-squares estimator
\begin{equation}\label{eq:ls_estimator}
    \hat{\boldsymbol{K}}_i = - \overline{\boldsymbol{U}}_i \, \overline{\boldsymbol{X}}^{-1}
\end{equation}
recovers the underlying linear feedback strategy, i.e., $\hat{\boldsymbol{K}}_i = \boldsymbol{K}_i^\star$ almost surely. Moreover, $\hat{\boldsymbol{K}}_i$ generates trajectories that match the ground truth trajectories as in Definition~\ref{def:trajectory_matching}.
\end{lemma}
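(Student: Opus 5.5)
The argument substitutes the feedback law from Assumption~\ref{ass:nash_trajectories} into the data matrices \eqref{eq:data_matrices} and then inverts $\overline{\boldsymbol{X}}$.

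\emph{Step 1 (express $\overline{\boldsymbol{U}}_i$ through $\overline{\boldsymbol{X}}$).} By \eqref{eq:nash_trajectories}, for each demonstration $d$ we have $\boldsymbol{u}_i^{\star(d)}(t) = -\boldsymbol{K}_i^\star \boldsymbol{x}^{\star(d)}(t)$ for all $t\in[0,T_N]$ almost surely. Since $D$ is finite, a finite union of null sets is null, so this holds simultaneously for all $d$ on a single event of probability one. Substituting into the definition of $\overline{\boldsymbol{U}}_i$ and pulling the constant matrix $-\boldsymbol{K}_i^\star$ out of the finite sum and the integral gives
\begin{equation*}
\overline{\boldsymbol{U}}_i = -\boldsymbol{K}_i^\star\,\overline{\boldsymbol{X}}
\end{equation*}
almost surely. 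Linearity of the integral applies because the integrands are continuous sample paths on a compact interval; as noted after \eqref{eq:data_matrices}, the entries of $\overline{\boldsymbol{X}}$ are finite, and by Cauchy--Schwarz so are those of $\overline{\boldsymbol{U}}_i$.

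\emph{Step 2 (invertibility).} Assumption~\ref{ass:persistence_excitation} gives $\overline{\boldsymbol{X}}\succeq\alpha\boldsymbol{I}_n$ with $\alpha>0$ almost surely. Hence $\overline{\boldsymbol{X}}$ is symmetric positive definite and therefore invertible on an event of full probability. Intersecting this event with the one from Step~1, we multiply the identity from Step~1 on the right by $\overline{\boldsymbol{X}}^{-1}$ and obtain
\begin{equation*}
\hat{\boldsymbol{K}}_i = -\overline{\boldsymbol{U}}_i\overline{\boldsymbol{X}}^{-1} = \boldsymbol{K}_i^\star
\end{equation*}
almost surely, for every $i\in\mathcal{I}$. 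The estimator is also the least-squares minimizer: the normal equations of $\min_{\boldsymbol{K}_i}\int_0^{T_N}\frac1D\sum_d\|\boldsymbol{u}_i^{\star(d)}+\boldsymbol{K}_i\boldsymbol{x}^{\star(d)}\|^2dt$ are exactly $\boldsymbol{K}_i\overline{\boldsymbol{X}}=-\overline{\boldsymbol{U}}_i$. This point is only needed to justify the name.

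\emph{Step 3 (trajectory matching).} Since $\hat{\boldsymbol{K}}=\boldsymbol{K}^\star\in\mathcal{K}$, the closed-loop SDE \eqref{eq:closed_loop_system} driven by $\hat{\boldsymbol{K}}$ has the same coefficients as the one driven by $\boldsymbol{K}^\star$. These coefficients are linear, hence globally Lipschitz. With the same initial condition $\boldsymbol{x}_0$ and the same Wiener realization $d$, pathwise uniqueness of strong solutions therefore gives $\hat{\boldsymbol{x}}^{(d)}(t)=\boldsymbol{x}^{\star(d)}(t)$ for all $t\in[0,T_N]$ almost surely. It then follows that $\hat{\boldsymbol{u}}_i^{(d)}=-\hat{\boldsymbol{K}}_i\hat{\boldsymbol{x}}^{(d)}=\boldsymbol{u}_i^{\star(d)}$, which is the matching required by Definition~\ref{def:trajectory_matching}.

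The only delicate point is the almost-sure bookkeeping: all statements must hold on a common full-measure event, so that the substitution in Step~1 and the inversion in Step~2 are valid simultaneously. This is handled by intersecting finitely many probability-one events. The SDE uniqueness in Step~3 is standard and requires no further work.
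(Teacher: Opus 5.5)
Your proof is correct and follows the same route as the paper. You substitute the feedback law to get $\overline{\boldsymbol{U}}_i = -\boldsymbol{K}_i^\star\overline{\boldsymbol{X}}$, invert $\overline{\boldsymbol{X}}$ using persistence of excitation, and conclude trajectory matching from unique solvability of the closed-loop SDE; your common full-measure-event bookkeeping and the normal-equations remark are only added detail.
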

\begin{pf}
Under Assumptions~\ref{ass:mss} and~\ref{ass:nash_trajectories}, for each $i \in \mathcal{I}$, substituting \eqref{eq:nash_trajectories} into \eqref{eq:data_matrices} yields $\overline{\boldsymbol{U}}_i = - \boldsymbol{K}_i^\star\overline{\boldsymbol{X}}$. By Assumption~\ref{ass:persistence_excitation}, $\overline{\boldsymbol{X}}$ is positive definite and hence invertible almost surely. Therefore, $\hat{\boldsymbol{K}}_i = - \overline{\boldsymbol{U}}_i \, \overline{\boldsymbol{X}}^{-1} = \boldsymbol{K}_i^\star$ almost surely. For any realization of the underlying Wiener process $\omega(\cdot)$ and the initial condition $\boldsymbol{x}_0$, $\hat{\boldsymbol{K}}_i$ and $\boldsymbol{K}_i^\star$ produce trajectories that match as in Definition~\ref{def:trajectory_matching} due to unique solvability of the stochastic differential equation \eqref{eq:closed_loop_system} and Assumption~\ref{ass:mss}.
\qed
\end{pf}

\subsection{Derivation of Equilibrium Conditions}
\begin{lemma}\label{lem:coupled_Riccati_necessary_sufficient}
    Let Assumptions~\ref{ass:mss} and~\ref{ass:admissible_set} hold. Let there exist an $N$-tuple $(\boldsymbol{P}_1,\dots,\boldsymbol{P}_N)$ with $\boldsymbol{P}_i\in \mathbb{S}^n_{++},\,\forall i \in \mathcal{I}$, satisfying the $N$ coupled stochastic algebraic Riccati equations
    \begin{equation}\label{eq:coupled_riccati}
    \begin{aligned}
        \boldsymbol{0}&=\boldsymbol{P}_i \boldsymbol{A}_{cl}(\boldsymbol{K}    ^\star) + \boldsymbol{A}_{cl}(\boldsymbol{K}^\star)^\top\boldsymbol{P}_i \\
        &+ \boldsymbol{C}_{cl}(\boldsymbol{K}^\star)^\top\boldsymbol{P}_i \boldsymbol{C}_{cl}(\boldsymbol{K}^\star)+\boldsymbol{Q}_i 
        + \sum_{j\in\mathcal{I}} \boldsymbol{K}_j^{\star\top}\boldsymbol{R}_{ij} \boldsymbol{K}_j^\star
    \end{aligned}
\end{equation}
    with
    \begin{equation}\label{eq:control_strategy}
    \begin{aligned}
        \boldsymbol{K}_i^\star &= \left(\boldsymbol{R}_{ii}+\boldsymbol{D}_i^\top\boldsymbol{P}_i\boldsymbol{D}_i\right)^{-1}\\&\bigg(\boldsymbol{B}_i^\top \boldsymbol{P}_i+\boldsymbol{D}_i^\top \boldsymbol{P}_i\boldsymbol{C}-\sum_{k\in\mathcal{I}\setminus\{i\}}\boldsymbol{D}_i^\top\boldsymbol{P}_i\boldsymbol{D}_k\boldsymbol{K}^\star_k\bigg).
    \end{aligned}
\end{equation}
Then, $\boldsymbol{K}^\star$ is a linear feedback Nash equilibrium as in Definition~\ref{def:nash_equilibrium} and $J_i(\boldsymbol{x}_0,\boldsymbol{K}^\star)=\boldsymbol{x}_0^\top\boldsymbol{P}_i\boldsymbol{x}_0,\, \forall i \in \mathcal{I}$. 

Conversely, if $\boldsymbol{K}^\star$ is a linear feedback Nash equilibrium, then the set of coupled stochastic algebraic Riccati equations \eqref{eq:coupled_riccati} with \eqref{eq:control_strategy}, $\forall i \in \mathcal{I}$, has a stabilizing solution.
\end{lemma}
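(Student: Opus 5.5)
The plan is to reduce the $N$-player statement to a family of single-player stochastic LQ problems. Fix $i\in\mathcal{I}$ and freeze $\boldsymbol{K}_{\neg i}^\star$. Player $i$ then faces the controlled system
\[
d\boldsymbol{x}=\big(\tilde{\boldsymbol{A}}_i\boldsymbol{x}+\boldsymbol{B}_i\boldsymbol{u}_i\big)dt+\big(\tilde{\boldsymbol{C}}_i\boldsymbol{x}+\boldsymbol{D}_i\boldsymbol{u}_i\big)d\omega,
\]
where $\tilde{\boldsymbol{A}}_i:=\boldsymbol{A}-\sum_{k\neq i}\boldsymbol{B}_k\boldsymbol{K}_k^\star$ and $\tilde{\boldsymbol{C}}_i:=\boldsymbol{C}-\sum_{k\neq i}\boldsymbol{D}_k\boldsymbol{K}_k^\star$. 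Its cost has state weight $\tilde{\boldsymbol{Q}}_i:=\boldsymbol{Q}_i+\sum_{k\neq i}\boldsymbol{K}_k^{\star\top}\boldsymbol{R}_{ik}\boldsymbol{K}_k^\star\succ0$ and control weight $\boldsymbol{R}_{ii}\succ0$, both guaranteed by Assumption~\ref{ass:admissible_set}. By Definition~\ref{def:nash_equilibrium}, $\boldsymbol{K}^\star$ is a Nash equilibrium if and only if, for each $i$, $\boldsymbol{K}_i^\star$ minimizes this single-player cost over all $\boldsymbol{K}_i$ that keep $(\boldsymbol{K}_i,\boldsymbol{K}_{\neg i}^\star)\in\mathcal{K}$.

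\emph{Sufficiency.} This is a completion-of-squares argument. Take any $\boldsymbol{K}_i$ with $(\boldsymbol{K}_i,\boldsymbol{K}_{\neg i}^\star)\in\mathcal{K}$ and apply It\^o's formula to $\boldsymbol{x}^\top\boldsymbol{P}_i\boldsymbol{x}$ along the resulting closed loop. Integrate on $[0,T]$ and take expectations; the stochastic integral is a true martingale because the second moments are finite. Exponential mean-square stability \cite[Theorem~1.5.3]{Dam:04} gives $\mathbb{E}[\boldsymbol{x}(T)^\top\boldsymbol{P}_i\boldsymbol{x}(T)]\to0$ as $T\to\infty$. Then use \eqref{eq:coupled_riccati}, rewritten as the single-player Riccati equation with gain \eqref{eq:control_strategy}. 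The integrand of $J_i$ plus $d/dt\,\mathbb{E}[\boldsymbol{x}^\top\boldsymbol{P}_i\boldsymbol{x}]$ becomes
\[
\mathbb{E}\big[(\boldsymbol{K}_i-\boldsymbol{K}_i^\star)\boldsymbol{x}\big]^\top\big(\boldsymbol{R}_{ii}+\boldsymbol{D}_i^\top\boldsymbol{P}_i\boldsymbol{D}_i\big)\big[(\boldsymbol{K}_i-\boldsymbol{K}_i^\star)\boldsymbol{x}\big]\ge0 .
\]
This gives $J_i(\boldsymbol{x}_0,\boldsymbol{K}_i,\boldsymbol{K}_{\neg i}^\star)=\boldsymbol{x}_0^\top\boldsymbol{P}_i\boldsymbol{x}_0+\text{(nonnegative term)}$. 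Setting $\boldsymbol{K}_i=\boldsymbol{K}_i^\star$ makes the extra term vanish, so the cost equals $\boldsymbol{x}_0^\top\boldsymbol{P}_i\boldsymbol{x}_0$ and the inequality holds for all $\boldsymbol{x}_0$. The one thing to verify is that the cross terms in $\boldsymbol{C}_{cl}^\top\boldsymbol{P}_i\boldsymbol{C}_{cl}$ reorganize exactly into $\tilde{\boldsymbol{C}}_i$ and the gain formula \eqref{eq:control_strategy}; this is routine algebra.

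\emph{Necessity.} Suppose $\boldsymbol{K}^\star$ is a Nash equilibrium. Since $\boldsymbol{K}^\star\in\mathcal{K}$, the pair $(\tilde{\boldsymbol{A}}_i,\tilde{\boldsymbol{C}}_i)$ with feedback $\boldsymbol{K}_i^\star$ is mean-square stable, so the single-player problem is mean-square stabilizable. Because $\tilde{\boldsymbol{Q}}_i\succ0$ and $\boldsymbol{R}_{ii}\succ0$, I would invoke the standard stochastic LQ theory (e.g.\ \cite{Dam:04}, or Rami--Zhou type results). These give a unique stabilizing solution $\boldsymbol{P}_i\succ0$ of the single-player stochastic algebraic Riccati equation, with $\boldsymbol{R}_{ii}+\boldsymbol{D}_i^\top\boldsymbol{P}_i\boldsymbol{D}_i\succ0$. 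The induced gain $\bar{\boldsymbol{K}}_i$ is the unique minimizer among stabilizing linear feedbacks, with optimal cost $\boldsymbol{x}_0^\top\boldsymbol{P}_i\boldsymbol{x}_0$.

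Since $\boldsymbol{K}_i^\star$ is also optimal, the completion-of-squares identity from the sufficiency part, now with this $\boldsymbol{P}_i$, forces
\[
\int_0^\infty\mathbb{E}\big[\|(\boldsymbol{K}_i^\star-\bar{\boldsymbol{K}}_i)\boldsymbol{x}(t)\|^2\big]dt=0 \quad\text{for every }\boldsymbol{x}_0 .
\]
Because $\boldsymbol{x}_0$ is deterministic and arbitrary, and $\mathbb{E}[\boldsymbol{x}\boldsymbol{x}^\top]$ is continuous with value $\boldsymbol{x}_0\boldsymbol{x}_0^\top$ at $t=0$, it follows that $\boldsymbol{K}_i^\star=\bar{\boldsymbol{K}}_i$. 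Hence \eqref{eq:control_strategy} holds. Substituting this gain back into the single-player Riccati equation recovers \eqref{eq:coupled_riccati} for this $i$. Doing this for every $i$ yields the stabilizing tuple $(\boldsymbol{P}_1,\dots,\boldsymbol{P}_N)$.

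\emph{Main obstacle.} The hardest step is the existence of a stabilizing Riccati solution in the necessity direction when the diffusion depends on the control. The cited theory requires mean-square stabilizability together with definiteness conditions; here these come from Assumption~\ref{ass:admissible_set} and from $\boldsymbol{K}^\star\in\mathcal{K}$. If no citation fits exactly, I would construct $\boldsymbol{P}_i$ by Kleinman-type policy iteration on the generalized Lyapunov equations, starting from $\boldsymbol{K}_i^\star$. The iterates are stabilizing and monotonically decreasing in $\boldsymbol{P}_i$, bounded below by $\boldsymbol{0}$, and the limit solves the Riccati equation.
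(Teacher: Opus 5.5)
Your proposal is correct. Your sufficiency argument is the paper's completion of squares. The paper lets player $i$ use an arbitrary adapted deviation $\tilde{\boldsymbol{u}}_i(t)$ and controls the It\^o integral by localization with stopping times plus monotone and dominated convergence. You restrict to linear feedback deviations, which is all Definition~\ref{def:nash_equilibrium} requires.

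The necessity part is where you genuinely differ. The paper takes the quadratic value function $\boldsymbol{x}^\top\boldsymbol{P}_i\boldsymbol{x}$, $\boldsymbol{P}_i\succ\boldsymbol{0}$, from \cite[Lemma~1]{Han:26}. It then writes player $i$'s stochastic HJB equation via the dynamic programming principle and identifies $\boldsymbol{K}_i^\star$ with the unique pointwise minimizer. You instead take the stabilizing solution of player $i$'s single-player stochastic Riccati equation from stochastic LQ theory, or from Kleinman iteration started at $\boldsymbol{K}_i^\star$. You then combine three ingredients to force $\boldsymbol{K}_i^\star=\bar{\boldsymbol{K}}_i$: the completion-of-squares identity, the Nash inequality against $\bar{\boldsymbol{K}}_i$ (admissible since $\bar{\boldsymbol{K}}_i$ is stabilizing), and continuity of $\mathbb{E}[\boldsymbol{x}\boldsymbol{x}^\top]$ at $t=0$.

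Your route never leaves the class of stabilizing linear feedbacks. It therefore needs neither the dynamic programming principle nor the two steps the paper leaves implicit: that a best response among linear feedbacks is optimal over all controls (so that $V_i$ satisfies the HJB equation), and that an optimal control must coincide with the HJB minimizer. The price is that you must supply existence of the stabilizing Riccati solution yourself, which the paper outsources to \cite{Han:26}.

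That step is sound here. Because $\tilde{\boldsymbol{Q}}_i\succ\boldsymbol{0}$ and $\boldsymbol{R}_{ii}\succ\boldsymbol{0}$, $\boldsymbol{P}=\boldsymbol{0}$ is a strictly feasible point of the associated Riccati LMI. Likewise, your Kleinman limit satisfies $\boldsymbol{P}_i\succ\boldsymbol{0}$ with a negative definite generalized Lyapunov residual, so its gain is mean-square stabilizing.

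One small correction: the It\^o integrand $2\boldsymbol{x}^\top\boldsymbol{P}_i(\cdot)$ is quadratic in $\boldsymbol{x}$, so the martingale property needs finite fourth moments, not just second moments. These are finite for linear SDEs with deterministic initial state. Alternatively, you can work directly with the deterministic linear ODE for $\mathbb{E}[\boldsymbol{x}\boldsymbol{x}^\top]$.
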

\begin{pf}
(Sufficiency) Let there exist matrices $\boldsymbol{P}_i\in \mathbb{S}^n_{++},\,\forall i \in \mathcal{I}$, satisfying \eqref{eq:coupled_riccati} and \eqref{eq:control_strategy}. Consider player $i$ applying an arbitrary control of the form $\boldsymbol{u}_i(t) =\tilde{\boldsymbol{u}}_i(t) - \boldsymbol{K}_i^\star \boldsymbol{x}(t)$ and let other players apply $\boldsymbol{u}_j^\star(t)=-\boldsymbol{K}_j^\star \boldsymbol{x}(t),\, \forall j\in \mathcal{I}\setminus\{i\}$, such that \eqref{eq:system} is mean-square stable, i.e., \eqref{eq:mean_square_stability} holds.
Consider the quadratic value function candidate $V_i(\boldsymbol{x}(t)) = \boldsymbol{x}(t)^\top \boldsymbol{P}_i \boldsymbol{x}(t)$ and apply Itô's formula \cite[Theorem~1.3.1]{Dam:04} along the dynamics \eqref{eq:system} yielding
\begin{equation}\label{eq:suboptimal_value_function_increment}
\begin{aligned}
dV_i(\boldsymbol{x}(t))
&=
\bigg[
\boldsymbol{x}(t)^\top (\boldsymbol{P}_i \boldsymbol{A} + \boldsymbol{A}^\top \boldsymbol{P}_i + \boldsymbol{C}^\top \boldsymbol{P}_i \boldsymbol{C}) \boldsymbol{x}(t)\\
&+ 2 \sum_{j\in\mathcal{I}} \boldsymbol{x}(t)^\top (\boldsymbol{P}_i \boldsymbol{B}_j + \boldsymbol{C}^\top \boldsymbol{P}_i \boldsymbol{D}_j)\boldsymbol{u}_j(t)\\
&+ \sum_{j\in\mathcal{I}} \sum_{k\in\mathcal{I}}
\boldsymbol{u}_j(t)^\top \boldsymbol{D}_j^\top \boldsymbol{P}_i \boldsymbol{D}_k \boldsymbol{u}_k(t)
\bigg] dt \\
&+ 2 \boldsymbol{x}(t)^\top \boldsymbol{P}_i
\bigg(
\boldsymbol{C}\boldsymbol{x}(t) + \sum_{j\in\mathcal{I}} \boldsymbol{D}_j \boldsymbol{u}_j(t)
\bigg) d\omega(t).
\end{aligned}
\end{equation}
Adding the infinitesimal running cost of player $i$ from \eqref{eq:cost_function} and substituting $\boldsymbol{u}_j^\star(t)=-\boldsymbol{K}_j^\star \boldsymbol{x}(t),\, \forall j\in \mathcal{I}\setminus\{i\}$, and $\boldsymbol{u}_i(t) =\tilde{\boldsymbol{u}}_i(t) - \boldsymbol{K}_i^\star \boldsymbol{x}(t)$ into \eqref{eq:suboptimal_value_function_increment} gives
\begin{equation}\label{eq:expanded_suboptimal_value_function_increment}
    \begin{aligned}
        &dV_i(\boldsymbol{x}(t)) + \bigg( \boldsymbol{x}(t)^\top\boldsymbol{Q}_i\boldsymbol{x}(t) + \sum_{j\in\mathcal{I}} \boldsymbol{u}_j(t)^\top\boldsymbol{R}_{ij}\boldsymbol{u}_j(t) \bigg) dt\\
        &= \boldsymbol{x}(t)^\top\bigg[\boldsymbol{Q}_i + \boldsymbol{P}_i \boldsymbol{A}_{cl}(\boldsymbol{K}^\star) + \boldsymbol{A}_{cl}(\boldsymbol{K}^\star)^\top\boldsymbol{P}_i\\ 
        &+ \boldsymbol{C}_{cl}(\boldsymbol{K}^\star)^\top\boldsymbol{P}_i \boldsymbol{C}_{cl}(\boldsymbol{K}^\star)
        + \sum_{j\in\mathcal{I}} \boldsymbol{K}_j^{\star\top}\boldsymbol{R}_{ij} \boldsymbol{K}_j^\star\bigg]\boldsymbol{x}(t)dt \\
        &+ 2\tilde{\boldsymbol{u}}_i(t)^\top\bigg[\boldsymbol{B}_i^\top \boldsymbol{P}_i+\boldsymbol{D}_i^\top \boldsymbol{P}_i\boldsymbol{C} -\sum_{k\in\mathcal{I}\setminus\{i\}}\boldsymbol{D}_i^\top\boldsymbol{P}_i\boldsymbol{D}_k\boldsymbol{K}^\star_k\\
        &-\left(\boldsymbol{R}_{ii}+\boldsymbol{D}_i^\top\boldsymbol{P}_i\boldsymbol{D}_i\right)\boldsymbol{K}_i^\star\bigg]\boldsymbol{x}(t)dt \\
        &+\tilde{\boldsymbol{u}}_i(t)^\top 
        \left(\boldsymbol{R}_{ii} + \boldsymbol{D}_i^\top \boldsymbol{P}_i \boldsymbol{D}_i \right)
        \tilde{\boldsymbol{u}}_i(t)dt\\
        &+ 2\boldsymbol{x}(t)^\top \boldsymbol{P}_i 
        \left( \boldsymbol{C}_{cl}(\boldsymbol{K}^\star) \boldsymbol{x}(t) + \boldsymbol{D}_i \tilde{\boldsymbol{u}}_i(t) \right) d\omega(t).
    \end{aligned}
\end{equation}
Since \eqref{eq:coupled_riccati} and \eqref{eq:control_strategy} hold, the drift terms in \eqref{eq:expanded_suboptimal_value_function_increment} that are quadratic in $\boldsymbol{x}(t)$ and linear in $\tilde{\boldsymbol{u}}_i(t)$ vanish. Consequently,
\begin{equation}\label{eq:reduced_suboptimal_value_function_increment}
    \begin{aligned}
        &dV_i(\boldsymbol{x}(t)) + \bigg( \boldsymbol{x}(t)^\top\boldsymbol{Q}_i\boldsymbol{x}(t) + \sum_{j\in\mathcal{I}} \boldsymbol{u}_j(t)^\top\boldsymbol{R}_{ij}\boldsymbol{u}_j(t) \bigg) dt\\
        &= \tilde{\boldsymbol{u}}_i(t)^\top 
        \left(\boldsymbol{R}_{ii} + \boldsymbol{D}_i^\top \boldsymbol{P}_i \boldsymbol{D}_i \right)
        \tilde{\boldsymbol{u}}_i(t)dt\\
        &+ 2\,\boldsymbol{x}(t)^\top \boldsymbol{P}_i 
        \left( \boldsymbol{C}_{cl}(\boldsymbol{K}^\star) \boldsymbol{x}(t) + \boldsymbol{D}_i \tilde{\boldsymbol{u}}_i(t) \right) d\omega(t).
    \end{aligned}
\end{equation}
 For fixed $T>0$, we localize with the stopping time $\tau_l := \min(\inf\{t \geq 0: \|\boldsymbol{x}(t)\| \geq l\}, T)$, so that the stopped integral
 \begin{equation}
     \int_0^{\tau_l}2\,\boldsymbol{x}(t)^\top \boldsymbol{P}_i 
        \left( \boldsymbol{C}_{cl}(\boldsymbol{K}^\star) \boldsymbol{x}(t) + \boldsymbol{D}_i \tilde{\boldsymbol{u}}_i(t) \right) d\omega(t)
 \end{equation}
 is a true martingale with zero mean. Integrating \eqref{eq:reduced_suboptimal_value_function_increment} over $[0,\tau_l]$ and taking expectations thus eliminates the stochastic integral on this interval. Since $\tau_l\to T$ almost surely as $l\to\infty$, the stopped running cost integral
        \begin{equation}
            \int_0^{\tau_l}\boldsymbol{x}(t)^\top\boldsymbol{Q}_i\boldsymbol{x}(t) + \sum_{j\in\mathcal{I}} \boldsymbol{u}_j(t)^\top\boldsymbol{R}_{ij}\boldsymbol{u}_j(t)dt
        \end{equation}
        and the stopped deviation integral 
        \begin{equation}
            \int_0^{\tau_l}\tilde{\boldsymbol{u}}_i(t)^\top 
        (\boldsymbol{R}_{ii} + \boldsymbol{D}_i^\top \boldsymbol{P}_i \boldsymbol{D}_i)
        \tilde{\boldsymbol{u}}_i(t)dt
        \end{equation}
        increase almost surely to their counterparts on $[0,T]$, and the monotone convergence theorem \cite{Nis:15} applies. Moreover, continuity of the state trajectories implies $V_i(\boldsymbol{x}(\tau_l))\to V_i(\boldsymbol{x}(T))$ almost surely, and since $\mathbb{E}\bigl[\sup_{t\in[0,T]}\|\boldsymbol{x}(t)\|^2\bigr]<\infty$ for \eqref{eq:system}, the dominated convergence theorem \cite{Nis:15} implies $\mathbb{E}[V_i(\boldsymbol{x}(\tau_l))]\to\mathbb{E}[V_i(\boldsymbol{x}(T))]$. Hence,
\begin{equation}
\begin{aligned}
&\mathbb{E}\bigg[\int_{0}^T\boldsymbol{x}(t)^\top\boldsymbol{Q}_i\boldsymbol{x}(t) + \sum_{j\in\mathcal{I}} \boldsymbol{u}_j(t)^\top\boldsymbol{R}_{ij}\boldsymbol{u}_j(t)\,dt\bigg]\\ &=
V_i(\boldsymbol{x}_0) - \mathbb{E}\left[V_i(\boldsymbol{x}(T))\right]\\
&+ \mathbb{E} \bigg[\int_0^T 
\tilde{\boldsymbol{u}}_i(t)^\top 
\left(\boldsymbol{R}_{ii} + \boldsymbol{D}_i^\top \boldsymbol{P}_i \boldsymbol{D}_i \right)
\tilde{\boldsymbol{u}}_i(t) dt\bigg].
\end{aligned}
\end{equation}
Letting $T\rightarrow\infty$ and using that \eqref{eq:mean_square_stability} is assumed to hold yields
\begin{equation}\label{eq:suboptimal_cost_function}
\begin{aligned}
&J_i(\boldsymbol{x}_0, \boldsymbol{u}_i(t), \boldsymbol{u}_{\neg i}^\star(t)) =
V_i(\boldsymbol{x}_0)\\
&+ \mathbb{E} \left[\int_0^\infty 
\tilde{\boldsymbol{u}}_i(t)^\top 
\left(\boldsymbol{R}_{ii} + \boldsymbol{D}_i^\top \boldsymbol{P}_i \boldsymbol{D}_i \right)
\tilde{\boldsymbol{u}}_i(t) dt\right].
\end{aligned}
\end{equation}
Under Assumption~\ref{ass:admissible_set}, and since $\boldsymbol{P}_i\in \mathbb{S}^n_{++}$, we get $\boldsymbol{R}_{ii}+\boldsymbol{D}_i^\top\boldsymbol{P}_i\boldsymbol{D}_i\in \mathbb{S}^{m_i}_{++}$. Thus \eqref{eq:suboptimal_cost_function} is minimized if and only if $\tilde{\boldsymbol{u}}_i(t) \equiv 0$, i.e., $\boldsymbol{u}_i(t) = -\boldsymbol{K}_i^\star \boldsymbol{x}(t)$. Since this holds for all players, $\boldsymbol{K}^\star$ constitutes a linear feedback Nash equilibrium with $J_i(\boldsymbol{x}_0,\boldsymbol{K}^\star)=\boldsymbol{x}_0^\top\boldsymbol{P}_i\boldsymbol{x}_0,\, \forall i \in \mathcal{I}$.

(Necessity) Let $\boldsymbol{K}^\star$ be a linear feedback Nash equilibrium according to Definition~\ref{def:nash_equilibrium}. Then, for each $i \in \mathcal{I}$, the associated value function is quadratic in $\boldsymbol{x}(t)$, and there exists a unique stabilizing matrix $\boldsymbol{P}_i \in \mathbb{S}^n_{++}$ such that $V_i(\boldsymbol{x}(t))=\boldsymbol{x}(t)^\top\boldsymbol{P}_i\boldsymbol{x}(t)$ \cite[Lemma~1]{Han:26}. Consider $i \in \mathcal{I}$ with $\boldsymbol{u}_i(t)$ and suppose that all other players apply their fixed equilibrium strategies $\boldsymbol{u}_j^\star(t)=-\boldsymbol{K}_j^\star\boldsymbol{x}(t),\, \forall j \in \mathcal{I}\setminus\{i\}$, such that \eqref{eq:system} is mean-square stable, i.e., \eqref{eq:mean_square_stability} holds. We then consider the resulting single-player stochastic optimal control problem for player $i$. By the dynamic programming principle \cite{Bel:66} and Itô's formula \cite[Theorem~1.3.1]{Dam:04}, the corresponding stochastic \gls{hjb} equation for player $i$ is given by
\begin{equation}\label{eq:best_response_hjb}
\begin{aligned}
0 &= \min_{\boldsymbol{u}_i(t)} \bigg\{
\boldsymbol{x}(t)^\top \boldsymbol{Q}_i \boldsymbol{x}(t)
+ \sum_{j\in\mathcal{I}\setminus\{i\}} \boldsymbol{x}(t)^\top \boldsymbol{K}_j^{\star\top} \boldsymbol{R}_{ij} \boldsymbol{K}_j^\star \boldsymbol{x}(t)\\
&+ \boldsymbol{u}_i(t)^\top \boldsymbol{R}_{ii} \boldsymbol{u}_i(t) \\
&+ 2 \boldsymbol{x}(t)^\top \boldsymbol{P}_i 
\bigg(
\boldsymbol{A} \boldsymbol{x}(t) -\sum_{j\in\mathcal{I}\setminus\{i\}} \boldsymbol{B}_j \boldsymbol{K}_j^\star \boldsymbol{x}(t)
+ \boldsymbol{B}_i \boldsymbol{u}_i(t)
\bigg) \\
&+ \operatorname{tr}\bigg[\bigg(
\boldsymbol{C} \boldsymbol{x}(t) - \sum_{j\in\mathcal{I}\setminus\{i\}} \boldsymbol{D}_j \boldsymbol{K}_j^\star \boldsymbol{x}(t)
+ \boldsymbol{D}_i \boldsymbol{u}_i(t)
\bigg)^\top \boldsymbol{P}_i \\
&\bigg(
\boldsymbol{C} \boldsymbol{x}(t) - \sum_{k\in\mathcal{I}\setminus\{i\}} \boldsymbol{D}_k \boldsymbol{K}_k^\star \boldsymbol{x}(t)
+ \boldsymbol{D}_i \boldsymbol{u}_i(t)
\bigg)\bigg]
\bigg\}.
\end{aligned}
\end{equation}
Differentiating \eqref{eq:best_response_hjb} with respect to $\boldsymbol{u}_i(t)$ and setting the derivative to zero yields
\begin{equation}\label{eq:first_order_condition_br}
\begin{aligned}
\boldsymbol{0}
&= 2\left(\boldsymbol{R}_{ii} + \boldsymbol{D}_i^\top \boldsymbol{P}_i \boldsymbol{D}_i\right)\boldsymbol{u}_i(t) \\
& + 2\bigg(
\boldsymbol{B}_i^\top \boldsymbol{P}_i 
+ \boldsymbol{D}_i^\top \boldsymbol{P}_i \boldsymbol{C}
- \sum_{k\in\mathcal{I}\setminus\{i\}} \boldsymbol{D}_i^\top \boldsymbol{P}_i \boldsymbol{D}_k \boldsymbol{K}_k^\star
\bigg)\boldsymbol{x}(t).
\end{aligned}
\end{equation}
The Hessian of \eqref{eq:best_response_hjb} with respect to $\boldsymbol{u}_i(t)$ is given by $2\left(\boldsymbol{R}_{ii} + \boldsymbol{D}_i^\top \boldsymbol{P}_i \boldsymbol{D}_i\right)$, which is positive definite under Assumption~\ref{ass:admissible_set} and since $\boldsymbol{P}_i\in \mathbb{S}^n_{++}$, ensuring strict convexity and uniqueness of the minimizer, which by \eqref{eq:first_order_condition_br} is of linear feedback form. Since $\boldsymbol{K}^\star$ is a linear feedback Nash equilibrium, $\boldsymbol{u}_i^\star(t)=-\boldsymbol{K}_i^\star\boldsymbol{x}(t)$ is an optimal control for this same single-player problem. Hence, $\boldsymbol{u}_i^\star(t)$ coincides with the solution of \eqref{eq:first_order_condition_br} by the uniqueness of the minimizer, and \eqref{eq:control_strategy} follows. Substituting this control into \eqref{eq:best_response_hjb} and collecting all terms quadratic in $\boldsymbol{x}(t)$ yields \eqref{eq:coupled_riccati}. Since this holds for all players given $\boldsymbol{K}^\star$, there exist matrices $\boldsymbol{P}_i\in\mathbb{S}_{++}^n$, $i \in \mathcal{I}$, such that \eqref{eq:coupled_riccati} and \eqref{eq:control_strategy} hold.\qed
\end{pf}
\begin{remark}\label{rem:compare_to_det_condition}
Comparing Lemma~\ref{lem:coupled_Riccati_necessary_sufficient} with the corresponding result for deterministic infinite-horizon differential games in \cite[Theorem~4]{Eng:00}, as well as with \gls{lqg} differential games and stochastic \gls{lq} differential games with purely state-dependent noise, the equilibrium feedback strategies \eqref{eq:control_strategy} are fundamentally altered by the control-dependent noise. In the aforementioned settings, \eqref{eq:control_strategy} reduces to $\boldsymbol{K}_i^\star=\boldsymbol{R}_{ii}^{-1}\boldsymbol{B}_i^\top\boldsymbol{P}_i$, such that the algebraic Riccati equations are coupled only through $\sum_{j\in\mathcal{I}}\boldsymbol{K}_j^{\star\top}\boldsymbol{R}_{ij}\boldsymbol{K}_j^\star$ in \eqref{eq:coupled_riccati}, yielding a system of polynomial matrix equations. By contrast, the control-dependent noise introduces the term $\sum_{k\in\mathcal{I}\setminus\{i\}}\boldsymbol{D}_i^\top\boldsymbol{P}_i\boldsymbol{D}_k\boldsymbol{K}_k^\star$ in \eqref{eq:control_strategy}, such that each player's equilibrium strategy depends explicitly on the equilibrium strategies $\boldsymbol{K}_k^\star$ of the other players. Consequently, the coupled stochastic algebraic Riccati equations \eqref{eq:coupled_riccati} and \eqref{eq:control_strategy} form a system of rational matrix equations.
\end{remark}
In view of the structural differences discussed in Remark~\ref{rem:compare_to_det_condition}, Lemma~\ref{lem:coupled_Riccati_necessary_sufficient} provides the theoretical prerequisite for the proposed \gls{isdg} method.
\subsection{Kernel Representation of Equilibrium Conditions}
Since Lemma~\ref{lem:coupled_Riccati_necessary_sufficient} establishes necessary and sufficient conditions for linear feedback Nash equilibria, the cost function matrices $\boldsymbol{Q}_i$ and $\boldsymbol{R}_{ij},\,\forall i,j \in \mathcal{I}$, must satisfy \eqref{eq:coupled_riccati} with \eqref{eq:control_strategy} for the recovered strategies $\hat{\boldsymbol{K}}_i = \boldsymbol{K}_i^\star,\,\forall i \in \mathcal{I}$, as established in Lemma~\ref{lem:identification_strategies}. We exploit this property to develop an inverse method for estimating the cost function parameters by deriving a kernel representation of \eqref{eq:coupled_riccati} and \eqref{eq:control_strategy}, as described in the following.
\begin{lemma}\label{lem:kernel_Riccati}
    Let Assumption~\ref{ass:mss} hold. Then, $\forall i \in \mathcal{I}$, the matrices $\boldsymbol{Q}_i$ and $\boldsymbol{R}_{ij},\, \forall j \in \mathcal{I}$, contained in $\boldsymbol{\theta}_i$ as defined in \eqref{eq:theta_i}, satisfy \eqref{eq:coupled_riccati} and \eqref{eq:control_strategy} if and only if $\boldsymbol{\theta}_i$ satisfies
\begin{equation}\label{eq:kernel_riccati}
    \boldsymbol{M}_i\boldsymbol{\theta}_i = \boldsymbol{0} \quad \Leftrightarrow \quad \boldsymbol{\theta}_i \in \ker(\boldsymbol{M}_i).
\end{equation}
    The matrix $\boldsymbol{M}_i\in \mathbb{R}^{nm_i \times p_i}$ in \eqref{eq:kernel_riccati} is defined as
    \begin{equation}\label{eq:matrix_M_i}
    \begin{aligned}
         \boldsymbol{M}_i &:= \Big[
        \boldsymbol{S}_i
       \quad
       \boldsymbol{S}_i\boldsymbol{K}_1^{\otimes}
       \;\;
       \cdots
       \;\;
       \boldsymbol{S}_i\boldsymbol{K}_{i-1}^{\otimes}\quad
       \Big(\left(\boldsymbol{K}_i^{\top} \otimes \boldsymbol{I}_{m_i}\right)\\
       &+\boldsymbol{S}_i \boldsymbol{K}_i^{\otimes}\Big)\quad
       \boldsymbol{S}_i\boldsymbol{K}_{i+1}^{\otimes}
       \;\;
       \cdots
       \;\;
       \boldsymbol{S}_i \boldsymbol{K}_N^{\otimes}
        \Big],
    \end{aligned}
    \end{equation}
    with 
    \begin{equation}\label{eq:definition_of_K_i_otimes}
        \boldsymbol{K}_i^{\otimes}:=\boldsymbol{K}_i^{\top} \otimes  \boldsymbol{K}_i^{\top}\in\mathbb{R}^{n^2 \times m_i^2}
    \end{equation}
    and
    \begin{equation}\label{eq:definition_of_S_i}
        \begin{aligned}
            \boldsymbol{S}_i&:= \bigg(\left(\boldsymbol{I}_n \otimes \boldsymbol{B}_i^\top\right)+\left(\boldsymbol{C}^\top \otimes \boldsymbol{D}_i^\top\right)\\
            &-\sum_{k\in\mathcal{I}}\left(\boldsymbol{K}^{\top}_k\boldsymbol{D}_k^\top \otimes \boldsymbol{D}_i^\top\right)\bigg)\\
            &\Big(\boldsymbol{A}_{cl}(\boldsymbol{K})^\top \otimes \boldsymbol{I}_n
            +\boldsymbol{I}_n  \otimes \boldsymbol{A}_{cl}(\boldsymbol{K})^\top\\
            &+\boldsymbol{C}_{cl}(\boldsymbol{K})^\top \otimes \boldsymbol{C}_{cl}(\boldsymbol{K})^\top\Big)^{-1}\in \mathbb{R}^{nm_i \times n^2},
        \end{aligned}
    \end{equation}
    where $\otimes$ denotes the Kronecker product. 
\end{lemma}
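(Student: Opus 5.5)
The plan is to eliminate $\boldsymbol{P}_i$ from the coupled equations by solving the Riccati equation \eqref{eq:coupled_riccati}, for fixed $\boldsymbol{K}=\boldsymbol{K}^\star$, as a generalized Lyapunov equation in $\boldsymbol{P}_i$. Then substitute the resulting expression into the strategy equation \eqref{eq:control_strategy}, which after multiplying out is linear in $(\boldsymbol{Q}_i,\boldsymbol{R}_{i1},\dots,\boldsymbol{R}_{iN})$. Vectorizing with $\operatorname{vec}(\boldsymbol{X}\boldsymbol{Y}\boldsymbol{Z})=(\boldsymbol{Z}^\top\otimes\boldsymbol{X})\operatorname{vec}(\boldsymbol{Y})$ produces the linear homogeneous system $\boldsymbol{M}_i\boldsymbol{\theta}_i=\boldsymbol{0}$. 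Both directions of the equivalence follow because each step is either an identity or an invertible linear map.

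\emph{Step 1: invertibility of the Lyapunov operator.} Let
\[
\mathcal{L}(\boldsymbol{K}):=\boldsymbol{A}_{cl}^\top\otimes\boldsymbol{I}_n+\boldsymbol{I}_n\otimes\boldsymbol{A}_{cl}^\top+\boldsymbol{C}_{cl}^\top\otimes\boldsymbol{C}_{cl}^\top ,
\]
which is the matrix representation of $\boldsymbol{P}\mapsto\boldsymbol{P}\boldsymbol{A}_{cl}+\boldsymbol{A}_{cl}^\top\boldsymbol{P}+\boldsymbol{C}_{cl}^\top\boldsymbol{P}\boldsymbol{C}_{cl}$ acting on $\operatorname{vec}(\boldsymbol{P})$. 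Under Assumption~\ref{ass:mss}, $\boldsymbol{K}\in\mathcal{K}$. I will invoke the standard characterization of mean-square stability \cite{Dam:04}: the adjoint generalized Lyapunov operator is then Hurwitz, so all eigenvalues of $\mathcal{L}(\boldsymbol{K})$ have negative real part and $\mathcal{L}(\boldsymbol{K})$ is invertible. I expect this to be the main point needing care. It must be justified through the equivalence of mean-square stability with the second-moment ODE $\dot{\boldsymbol{\Sigma}}=\boldsymbol{A}_{cl}\boldsymbol{\Sigma}+\boldsymbol{\Sigma}\boldsymbol{A}_{cl}^\top+\boldsymbol{C}_{cl}\boldsymbol{\Sigma}\boldsymbol{C}_{cl}^\top$ being asymptotically stable, and from the fact that $\mathcal{L}(\boldsymbol{K})$ is the transpose of that operator's matrix, so the two share a spectrum.

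\emph{Step 2: solve for $\boldsymbol{P}_i$.} Vectorizing \eqref{eq:coupled_riccati} gives
\[
\mathcal{L}\operatorname{vec}(\boldsymbol{P}_i)=-\operatorname{vec}(\boldsymbol{Q}_i)-\sum_j\boldsymbol{K}_j^{\otimes}\operatorname{vec}(\boldsymbol{R}_{ij}).
\]
Here I use $\operatorname{vec}(\boldsymbol{K}_j^\top\boldsymbol{R}_{ij}\boldsymbol{K}_j)=(\boldsymbol{K}_j^\top\otimes\boldsymbol{K}_j^\top)\operatorname{vec}(\boldsymbol{R}_{ij})$. Hence
\[
\operatorname{vec}(\boldsymbol{P}_i)=-\mathcal{L}^{-1}\big[\boldsymbol{I}_{n^2}\;\boldsymbol{K}_1^{\otimes}\cdots\boldsymbol{K}_N^{\otimes}\big]\boldsymbol{\theta}_i ,
\]
and this holds if and only if \eqref{eq:coupled_riccati} holds.

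\emph{Step 3: rewrite the strategy condition.} Multiply \eqref{eq:control_strategy} by $\boldsymbol{R}_{ii}+\boldsymbol{D}_i^\top\boldsymbol{P}_i\boldsymbol{D}_i$, which is invertible under the standing assumptions. Moving the $\boldsymbol{D}_i^\top\boldsymbol{P}_i\boldsymbol{D}_i\boldsymbol{K}_i$ term to the right merges it with the sum over $k\neq i$, giving
\[
\boldsymbol{R}_{ii}\boldsymbol{K}_i=\boldsymbol{B}_i^\top\boldsymbol{P}_i+\boldsymbol{D}_i^\top\boldsymbol{P}_i\boldsymbol{C}-\sum_{k\in\mathcal{I}}\boldsymbol{D}_i^\top\boldsymbol{P}_i\boldsymbol{D}_k\boldsymbol{K}_k .
\]
Vectorizing, the right side becomes
\[
\Big(\boldsymbol{I}_n\otimes\boldsymbol{B}_i^\top+\boldsymbol{C}^\top\otimes\boldsymbol{D}_i^\top-\sum_k\boldsymbol{K}_k^\top\boldsymbol{D}_k^\top\otimes\boldsymbol{D}_i^\top\Big)\operatorname{vec}(\boldsymbol{P}_i),
\]
and the left side is $(\boldsymbol{K}_i^\top\otimes\boldsymbol{I}_{m_i})\operatorname{vec}(\boldsymbol{R}_{ii})$.

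\emph{Step 4: assemble $\boldsymbol{M}_i$.} Substituting Step~2 into Step~3 and collecting all terms on one side gives exactly $\boldsymbol{S}_i[\boldsymbol{I}\;\boldsymbol{K}_1^{\otimes}\cdots]\boldsymbol{\theta}_i+(\boldsymbol{K}_i^\top\otimes\boldsymbol{I}_{m_i})\operatorname{vec}(\boldsymbol{R}_{ii})=\boldsymbol{0}$, which is $\boldsymbol{M}_i\boldsymbol{\theta}_i=\boldsymbol{0}$; the signs match since $\operatorname{vec}(\boldsymbol{P}_i)$ carries a minus. The bookkeeping needs care in that the block $\boldsymbol{K}_i^\top\otimes\boldsymbol{I}_{m_i}$ sits in the $\boldsymbol{R}_{ii}$ column. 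Conversely, given $\boldsymbol{\theta}_i\in\ker(\boldsymbol{M}_i)$, define $\boldsymbol{P}_i$ by Step~2. This $\boldsymbol{P}_i$ is unique and symmetric, because the operator maps symmetric matrices to symmetric matrices and the right-hand side is symmetric. Reversing Steps~3--4 recovers \eqref{eq:control_strategy}, using invertibility of $\boldsymbol{R}_{ii}+\boldsymbol{D}_i^\top\boldsymbol{P}_i\boldsymbol{D}_i$.
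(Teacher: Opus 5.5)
Your proposal is correct and follows the paper's proof essentially step for step: vectorize \eqref{eq:coupled_riccati}, invert the generalized Lyapunov operator using the spectral characterization of mean-square stability, clear the inverse in \eqref{eq:control_strategy} so the $k=i$ term merges into the full sum, vectorize, and substitute to read off $\boldsymbol{M}_i$. Your explicit converse is more careful than the paper's proof, which does not spell it out. Note, however, that the invertibility of $\boldsymbol{R}_{ii}+\boldsymbol{D}_i^\top\boldsymbol{P}_i\boldsymbol{D}_i$ and the symmetry of the right-hand side that you invoke there come from admissibility (Assumption~\ref{ass:admissible_set}, which also gives $\boldsymbol{P}_i\succ 0$), not from Assumption~\ref{ass:mss} alone (e.g.\ $\boldsymbol{\theta}_i=\boldsymbol{0}\in\ker(\boldsymbol{M}_i)$); the paper tacitly shares this caveat, and it is harmless where the lemma is used in Theorem~\ref{theo:admissible_set}.
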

\begin{pf}
    Applying the vectorization identity
    \begin{equation}\label{eq:vectorization_identity}
        \operatorname{vec}(\boldsymbol{X}\boldsymbol{Y}\boldsymbol{Z})
        = (\boldsymbol{Z}^\top \otimes \boldsymbol{X}) \operatorname{vec}(\boldsymbol{Y}),        
    \end{equation}
    we rewrite \eqref{eq:coupled_riccati} as
    \begin{equation}\label{eq:vectorized_riccati}
        \begin{aligned}
            \operatorname{vec}(\boldsymbol{P}_i)&=-\Big(\boldsymbol{A}_{cl}(\boldsymbol{K})^\top \otimes \boldsymbol{I}_n
            +\boldsymbol{I}_n  \otimes \boldsymbol{A}_{cl}(\boldsymbol{K})^\top\\
            &+\boldsymbol{C}_{cl}(\boldsymbol{K})^\top \otimes \boldsymbol{C}_{cl}(\boldsymbol{K})^\top\Big)^{-1}\\&\bigg(\operatorname{vec}\left(\boldsymbol{Q}_i\right) 
            +\sum_{j\in\mathcal{I}} \left(\boldsymbol{K}_j^{\top} \otimes  \boldsymbol{K}_j^{\top}\right)\operatorname{vec}\left(\boldsymbol{R}_{ij}\right)\bigg).
\end{aligned}
    \end{equation}
    By Assumption~\ref{ass:mss}, \eqref{eq:closed_loop_system} is mean-square stable. Using \cite[Theorem~1.5.3]{Dam:04} and the Kronecker representation, this is equivalent to
    \begin{equation}
    \begin{aligned}
         &\sigma\Big(\boldsymbol{A}_{cl}(\boldsymbol{K})^\top \otimes \boldsymbol{I}_n
    +\boldsymbol{I}_n  \otimes \boldsymbol{A}_{cl}(\boldsymbol{K})^\top\\
    &+\boldsymbol{C}_{cl}(\boldsymbol{K})^\top \otimes \boldsymbol{C}_{cl}(\boldsymbol{K})^\top\Big)
        \subset \mathbb{C}_-,
    \end{aligned}
    \end{equation}
    where $\sigma(\cdot)$ denotes the spectrum and $\mathbb{C}_-$ the left half of the complex plane.  
    In particular, zero is not contained in the spectrum, and therefore the inverse appearing in \eqref{eq:vectorized_riccati} exists. Next, applying \eqref{eq:vectorization_identity} to \eqref{eq:control_strategy} yields
    \begin{equation}\label{eq:vectorized_strategy}
    \begin{aligned}
        \boldsymbol{0} &=\left(\boldsymbol{K}_i^{\top} \otimes \boldsymbol{I}_{m_i}\right)\operatorname{vec}\left(\boldsymbol{R}_{ii}\right)- \left(\boldsymbol{I}_n \otimes \boldsymbol{B}_i^\top\right) \operatorname{vec}\left(\boldsymbol{P}_i\right)\\
        &-\left(\boldsymbol{C}^\top \otimes \boldsymbol{D}_i^\top\right)\operatorname{vec}\left(\boldsymbol{P}_i\right)\\
        &+\sum_{k\in\mathcal{I}}\left(\boldsymbol{K}^{\top}_k\boldsymbol{D}_k^\top \otimes \boldsymbol{D}_i^\top\right)\operatorname{vec}\left(\boldsymbol{P}_i\right).
    \end{aligned}
    \end{equation}
    Inserting \eqref{eq:vectorized_riccati} into \eqref{eq:vectorized_strategy}, rearranging the resulting expression, and applying the definitions in \eqref{eq:matrix_M_i}, \eqref{eq:definition_of_K_i_otimes}, \eqref{eq:definition_of_S_i}, and \eqref{eq:theta_i} yields \eqref{eq:kernel_riccati}. \qed
\end{pf}
\begin{remark}
    Lemma~\ref{lem:kernel_Riccati} strictly generalizes the kernel representation underlying inverse methods for deterministic infinite-horizon \gls{lq} differential games, e.g.,~\cite{Ing:19}, to the setting with state- and control-dependent noise. In particular, the deterministic case arises as the special case $\boldsymbol{C}=\boldsymbol{0}$, $\boldsymbol{D}_i=\boldsymbol{0},\,\forall i\in\mathcal{I}$, in \eqref{eq:system}, for which the kernel representation \eqref{eq:kernel_riccati} continues to hold.
\end{remark}
\subsection{Solution Sets for Inverse Stochastic Differential Games}
In general, an infinite-horizon \gls{lq} differential game with state- and control-dependent noise may admit multiple solutions to \eqref{eq:coupled_riccati} with \eqref{eq:control_strategy}, corresponding to different Nash equilibria, similarly to phenomena observed in the deterministic case \cite{Thoe:26}. In the inverse setting, however, Assumption~\ref{ass:nash_trajectories} fixes the specific Nash equilibrium $\boldsymbol{K}^\star$ that is consistent with the observed demonstrations, yet an intrinsic ambiguity of the inverse problem remains. Specifically, multiple cost function parameter vectors $\boldsymbol{\theta}_i$ satisfy \eqref{eq:coupled_riccati} and \eqref{eq:control_strategy} for the same fixed $\boldsymbol{K}^\star$. To characterize all cost function parameterizations consistent with the observed equilibrium trajectories and thereby solve Problem~\ref{prob:isdg}, we now present our main result.
\begin{theorem}\label{theo:admissible_set}
    Let a stochastic differential game be given by \eqref{eq:system} and \eqref{eq:cost_function}. Let Assumptions~\ref{ass:mss},~\ref{ass:admissible_set},~\ref{ass:nash_trajectories}, and~\ref{ass:persistence_excitation} hold. Then the solution set as in Definition~\ref{def:solution_set}, corresponding to the given demonstrations, is determined as
    \begin{equation}\label{eq:admissible_set}
        \boldsymbol{\Theta}=\prod_{i\in\mathcal{I}} \left(\ker(\boldsymbol{M}_i) \cap \mathcal{A}_i\right).
    \end{equation}
\end{theorem}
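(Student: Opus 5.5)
The proof is a double inclusion between $\boldsymbol{\Theta}$ and $\prod_{i\in\mathcal{I}}(\ker(\boldsymbol{M}_i)\cap\mathcal{A}_i)$, with $\boldsymbol{M}_i$ built from the recovered strategies. The chain of earlier results is as follows.

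\begin{itemize}
\item Lemma~\ref{lem:identification_strategies} gives $\hat{\boldsymbol{K}}_i=\boldsymbol{K}_i^\star$ almost surely, and the trajectories generated by $\hat{\boldsymbol{K}}$ match the ground truth in the sense of Definition~\ref{def:trajectory_matching}. So the matrices $\boldsymbol{M}_i$ of Lemma~\ref{lem:kernel_Riccati}, evaluated at $\hat{\boldsymbol{K}}$, coincide with those at $\boldsymbol{K}^\star$.
\item Assumption~\ref{ass:mss} gives $\boldsymbol{K}^\star\in\mathcal{K}$, so the operator inverted in $\boldsymbol{S}_i$ is invertible.
\item The problem decouples player by player: once $\boldsymbol{K}^\star$ is fixed, condition \eqref{eq:coupled_riccati} with \eqref{eq:control_strategy} for player $i$ involves only $\boldsymbol{\theta}_i$ and $\boldsymbol{P}_i$, since the other players enter only through the known $\boldsymbol{K}_j^\star$. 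This is why the solution set will be a Cartesian product.
\end{itemize}

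\emph{Inclusion $\subseteq$.} Take $\hat{\boldsymbol{\theta}}\in\boldsymbol{\Theta}$. By Definition~\ref{def:solution_set}, each $\hat{\boldsymbol{\theta}}_i\in\mathcal{A}_i$, and the game parameterized by $\hat{\boldsymbol{\theta}}$ admits $\boldsymbol{K}^\star$ as a linear feedback Nash equilibrium. The necessity part of Lemma~\ref{lem:coupled_Riccati_necessary_sufficient} then gives $\boldsymbol{P}_i\in\mathbb{S}^n_{++}$ satisfying \eqref{eq:coupled_riccati} and \eqref{eq:control_strategy} with the matrices from $\hat{\boldsymbol{\theta}}_i$. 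Lemma~\ref{lem:kernel_Riccati} converts this into $\boldsymbol{M}_i\hat{\boldsymbol{\theta}}_i=\boldsymbol{0}$. Hence $\hat{\boldsymbol{\theta}}_i\in\ker(\boldsymbol{M}_i)\cap\mathcal{A}_i$ for every $i$.

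\emph{Inclusion $\supseteq$.} Take $\hat{\boldsymbol{\theta}}_i\in\ker(\boldsymbol{M}_i)\cap\mathcal{A}_i$ for each $i$. Define $\boldsymbol{P}_i$ by \eqref{eq:vectorized_riccati}, i.e., as the unique solution of the generalized Lyapunov equation \eqref{eq:coupled_riccati} with right-hand side $\hat{\boldsymbol{Q}}_i+\sum_j\boldsymbol{K}_j^{\star\top}\hat{\boldsymbol{R}}_{ij}\boldsymbol{K}_j^\star$. The kernel condition read backwards through Lemma~\ref{lem:kernel_Riccati} gives the vectorized \eqref{eq:vectorized_strategy}, which is \eqref{eq:control_strategy} multiplied through by $(\boldsymbol{R}_{ii}+\boldsymbol{D}_i^\top\boldsymbol{P}_i\boldsymbol{D}_i)$.

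To apply the sufficiency part of Lemma~\ref{lem:coupled_Riccati_necessary_sufficient}, I need $\boldsymbol{P}_i\in\mathbb{S}^n_{++}$. This is the main obstacle, because Lemma~\ref{lem:kernel_Riccati} only asserts equivalence of the equations and is silent on definiteness. I would handle it as follows.
\begin{itemize}
\item \emph{Symmetry.} The Lyapunov operator $\mathcal{L}(\boldsymbol{P})=\boldsymbol{P}\boldsymbol{A}_{cl}+\boldsymbol{A}_{cl}^\top\boldsymbol{P}+\boldsymbol{C}_{cl}^\top\boldsymbol{P}\boldsymbol{C}_{cl}$ maps symmetric matrices to symmetric matrices, and its unique solution for a symmetric right-hand side is symmetric.
\item \emph{Positive definiteness.} Mean-square stability gives the representation $\boldsymbol{x}_0^\top\boldsymbol{P}_i\boldsymbol{x}_0=\mathbb{E}\int_0^\infty\boldsymbol{x}^\top\big(\hat{\boldsymbol{Q}}_i+\sum_j\boldsymbol{K}_j^{\star\top}\hat{\boldsymbol{R}}_{ij}\boldsymbol{K}_j^\star\big)\boldsymbol{x}\,dt$. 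This follows from Itô's formula with the same localization as in the sufficiency proof, or equivalently from \cite[Theorem~1.5.3]{Dam:04}, under which $-\mathcal{L}^{-1}$ is a positive operator. Since $\hat{\boldsymbol{Q}}_i\succ0$ and $\hat{\boldsymbol{R}}_{ij}\succeq0$, the integrand dominates $\boldsymbol{x}^\top\hat{\boldsymbol{Q}}_i\boldsymbol{x}>0$ near $t=0$, so $\boldsymbol{P}_i\succ0$.
\item \emph{Recovering \eqref{eq:control_strategy}.} With $\boldsymbol{P}_i\succ0$ and $\hat{\boldsymbol{R}}_{ii}\succ0$, the matrix $\hat{\boldsymbol{R}}_{ii}+\boldsymbol{D}_i^\top\boldsymbol{P}_i\boldsymbol{D}_i$ is invertible, so \eqref{eq:control_strategy} holds exactly.
\end{itemize}
The sufficiency part of Lemma~\ref{lem:coupled_Riccati_necessary_sufficient} then shows that $\boldsymbol{K}^\star$ is a linear feedback Nash equilibrium of the game parameterized by $\hat{\boldsymbol{\theta}}$.

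Finally, the equilibrium trajectories induced by $\hat{\boldsymbol{\theta}}$ via $\boldsymbol{K}^\star$ coincide pathwise with the demonstrations, because the closed-loop SDE \eqref{eq:closed_loop_system} has a unique strong solution; this is the same argument used in Lemma~\ref{lem:identification_strategies}. So trajectory matching in the sense of Problem~\ref{prob:isdg} holds. Since the conditions for different players are independent given $\boldsymbol{K}^\star$, the two inclusions yield the product form \eqref{eq:admissible_set}.
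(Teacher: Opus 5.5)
Your proposal is correct and follows essentially the same route as the paper. Both fix $\boldsymbol{K}^\star$ via Lemma~\ref{lem:identification_strategies}. Both obtain $\subseteq$ from the necessity part of Lemma~\ref{lem:coupled_Riccati_necessary_sufficient} together with Lemma~\ref{lem:kernel_Riccati}, and $\supseteq$ from Lemma~\ref{lem:kernel_Riccati} together with the sufficiency part of Lemma~\ref{lem:coupled_Riccati_necessary_sufficient}.

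The only difference is in the $\supseteq$ step. You verify $\boldsymbol{P}_i\in\mathbb{S}^n_{++}$ directly from mean-square stability and $\hat{\boldsymbol{Q}}_i\succ 0$ via the Lyapunov/It\^o representation. This also gives invertibility of $\hat{\boldsymbol{R}}_{ii}+\boldsymbol{D}_i^\top\boldsymbol{P}_i\boldsymbol{D}_i$, which is needed to pass from the multiplied form \eqref{eq:vectorized_strategy} back to \eqref{eq:control_strategy}. The paper instead simply invokes \cite[Lemma~1]{Han:26} for this step.
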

\begin{pf}
    By Assumptions~\ref{ass:mss},~\ref{ass:nash_trajectories}, and~\ref{ass:persistence_excitation}, Lemma~\ref{lem:identification_strategies} yields $\hat{\boldsymbol{K}}_i = \boldsymbol{K}_i^\star$, $\hat{\boldsymbol{x}}(t)=\boldsymbol{x}^\star(t)$, and $\hat{\boldsymbol{u}}_i(t)=\boldsymbol{u}_i^\star(t), \, \forall i \in \mathcal{I}$, almost surely. Hence, we set $\boldsymbol{K}^\star=(\boldsymbol{K}_1^\star,\dots,\boldsymbol{K}_N^\star)$ for what follows.
    
    (Necessity)
    Let $\hat{\boldsymbol{\theta}}=(\hat{\boldsymbol{\theta}}_1,\dots,\hat{\boldsymbol{\theta}}_N)\in\boldsymbol{\Theta}$ as in Definition~\ref{def:solution_set}. By Definition~\ref{def:solution_set} and under Assumption~\ref{ass:admissible_set}, $\hat{\boldsymbol{\theta}}$ is admissible and induces $\boldsymbol{K}^\star$.
     By Lemma~\ref{lem:coupled_Riccati_necessary_sufficient} and under Assumptions~\ref{ass:mss} and~\ref{ass:admissible_set}, there exist matrices $\boldsymbol{P}_i \in \mathbb{S}^n_{++},\,\forall i \in \mathcal{I}$, such that \eqref{eq:coupled_riccati} and \eqref{eq:control_strategy} hold with $\boldsymbol{K}^\star$. Under Assumption~\ref{ass:mss}, Lemma~\ref{lem:kernel_Riccati} implies equivalently $\hat{\boldsymbol{\theta}}_i \in \ker(\boldsymbol{M}_i), \, \forall i \in \mathcal{I}$, and further under Assumption~\ref{ass:admissible_set}, $\hat{\boldsymbol{\theta}}_i \in \ker(\boldsymbol{M}_i) \cap \mathcal{A}_i, \, \forall i \in \mathcal{I}$. Therefore,
    \begin{equation}\label{eq:set_direction_1}
        \hat{\boldsymbol{\theta}} \in \Theta \Rightarrow \hat{\boldsymbol{\theta}} \in \prod_{i\in\mathcal{I}} \left(\ker(\boldsymbol{M}_i) \cap \mathcal{A}_i\right).
    \end{equation}
    
    (Sufficiency)
    Let 
    \begin{equation}\label{eq:theta_hat_in_prod}
        \hat{\boldsymbol{\theta}}=(\hat{\boldsymbol{\theta}}_1,\dots,\hat{\boldsymbol{\theta}}_N) \in \prod_{i\in\mathcal{I}} (\ker(\boldsymbol{M}_i)\cap\mathcal{A}_i).
    \end{equation}
    By Lemma~\ref{lem:kernel_Riccati} and under Assumption~\ref{ass:mss}, $\hat{\boldsymbol{\theta}}_i \in \ker(\boldsymbol{M}_i)$,
    $\forall i \in \mathcal{I}$, is equivalent to the satisfaction of \eqref{eq:coupled_riccati} and \eqref{eq:control_strategy} at the fixed strategies $\boldsymbol{K}^\star$. Moreover, by admissibility via Assumption~\ref{ass:admissible_set} and by \cite[Lemma~1]{Han:26}, there exist matrices $\boldsymbol{P}_i \in \mathbb{S}_{++}^n,\,\forall i\in\mathcal{I}$, and it follows by the sufficiency of Lemma~\ref{lem:coupled_Riccati_necessary_sufficient} that $\boldsymbol{K}^\star$ is a linear feedback Nash equilibrium induced by $\hat{\boldsymbol{\theta}}$. Hence, 
    by Definition~\ref{def:solution_set}, $\hat{\boldsymbol{\theta}}\in \boldsymbol{\Theta}$. That is,
    \begin{equation}\label{eq:set_direction_2}
        \hat{\boldsymbol{\theta}} \in \prod_{i\in\mathcal{I}} (\ker(\boldsymbol{M}_i)\cap\mathcal{A}_i) \Rightarrow \hat{\boldsymbol{\theta}} \in \Theta.
    \end{equation}
    
    Since \eqref{eq:set_direction_1} and \eqref{eq:set_direction_2} both hold, we conclude \eqref{eq:admissible_set}.\qed
\end{pf}
\begin{remark}\label{rem:non_uniqueness}
    The solution set $\boldsymbol{\Theta}$, determined by \eqref{eq:admissible_set}, ensures that $\boldsymbol{K}^\star$ is a linear feedback Nash equilibrium for all $\hat{\boldsymbol{\theta}} \in \boldsymbol{\Theta}$, but does not guarantee its uniqueness. For a fixed joint parameter vector $\hat{\boldsymbol{\theta}}$, the corresponding game may admit multiple Nash equilibria, of which $\boldsymbol{K}^\star$ is only one.
\end{remark}
\begin{corollary}\label{cor:set_not_empty}
    Under Assumptions~\ref{ass:mss},~\ref{ass:admissible_set},~\ref{ass:nash_trajectories}, and~\ref{ass:persistence_excitation}, the solution set satisfies $\boldsymbol{\Theta} \neq \varnothing$.
\end{corollary}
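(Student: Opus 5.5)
The plan is to exhibit the ground-truth parameter vector $\boldsymbol{\theta}^\star=(\boldsymbol{\theta}_1^\star,\dots,\boldsymbol{\theta}_N^\star)$ from Assumption~\ref{ass:nash_trajectories} as an element of $\boldsymbol{\Theta}$. Assumption~\ref{ass:nash_trajectories} asserts that the demonstrations are generated under a linear feedback Nash equilibrium $\boldsymbol{K}^\star$ of the game with cost parameters $\boldsymbol{\theta}^\star$. Assumption~\ref{ass:admissible_set} states that all cost parameters lie in the admissible sets, so $\boldsymbol{\theta}_i^\star\in\mathcal{A}_i$ for every $i\in\mathcal{I}$. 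Consequently, $\boldsymbol{\theta}^\star$ meets every requirement of Definition~\ref{def:solution_set}: each component parameterizes \eqref{eq:cost_function}, each is admissible, and the induced game admits $\boldsymbol{K}^\star$ as a linear feedback Nash equilibrium. Hence $\boldsymbol{\theta}^\star\in\boldsymbol{\Theta}$, and the set is nonempty.

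For a more constructive version that uses the explicit characterization, I would proceed through Theorem~\ref{theo:admissible_set}. First, Lemma~\ref{lem:identification_strategies} gives $\hat{\boldsymbol{K}}=\boldsymbol{K}^\star$ almost surely. Second, the necessity part of Lemma~\ref{lem:coupled_Riccati_necessary_sufficient}, applied to the game parameterized by $\boldsymbol{\theta}^\star$, yields matrices $\boldsymbol{P}_i\in\mathbb{S}^n_{++}$ satisfying \eqref{eq:coupled_riccati} and \eqref{eq:control_strategy} at $\boldsymbol{K}^\star$. Third, Lemma~\ref{lem:kernel_Riccati} converts this into $\boldsymbol{M}_i\boldsymbol{\theta}_i^\star=\boldsymbol{0}$, where $\boldsymbol{M}_i$ is built from $\boldsymbol{K}^\star$. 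Combining this with $\boldsymbol{\theta}_i^\star\in\mathcal{A}_i$ gives $\boldsymbol{\theta}^\star\in\prod_{i\in\mathcal{I}}(\ker(\boldsymbol{M}_i)\cap\mathcal{A}_i)$, which by \eqref{eq:admissible_set} equals $\boldsymbol{\Theta}$.

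The main subtlety is that $\boldsymbol{M}_i$ is formed from the estimated strategies. The argument therefore depends on the almost-sure identity $\hat{\boldsymbol{K}}_i=\boldsymbol{K}_i^\star$, which requires the persistence of excitation in Assumption~\ref{ass:persistence_excitation}. This identity ensures that the kernel computed from the data is the same one in which $\boldsymbol{\theta}_i^\star$ lies. Once it is in place, no further estimates are needed, and the corollary follows almost surely.
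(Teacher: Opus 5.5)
Your proposal is correct. Your second, ``constructive'' version is essentially the paper's proof. The paper uses the necessity part of Lemma~\ref{lem:coupled_Riccati_necessary_sufficient} and Lemma~\ref{lem:kernel_Riccati} to get $\boldsymbol{\theta}_i^\star\in\ker(\boldsymbol{M}_i)$. It then adds $\boldsymbol{\theta}_i^\star\in\mathcal{A}_i$ from Assumption~\ref{ass:admissible_set} and concludes $\boldsymbol{\theta}^\star\in\boldsymbol{\Theta}$, implicitly through \eqref{eq:admissible_set}.

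Your first paragraph takes a genuinely different and more elementary route. You check $\boldsymbol{\theta}^\star$ directly against Definition~\ref{def:solution_set}. That needs only admissibility (Assumption~\ref{ass:admissible_set}) and the fact that $\boldsymbol{K}^\star$ is a Nash equilibrium of the game parameterized by $\boldsymbol{\theta}^\star$ (Assumption~\ref{ass:nash_trajectories}). This shows the corollary is essentially immediate from the definitions. It uses none of the following:
\begin{itemize}
\item the Riccati characterization, in particular its necessity direction, which is the most delicate part of Lemma~\ref{lem:coupled_Riccati_necessary_sufficient} and rests on the cited result of \cite{Han:26};
\item the kernel representation;
\item persistence of excitation.
\end{itemize}
The conclusion is then deterministic rather than almost sure.

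What the kernel route buys is the practically relevant fact that $\boldsymbol{\theta}^\star$ lies in the set $\prod_{i\in\mathcal{I}}(\ker(\boldsymbol{M}_i)\cap\mathcal{A}_i)$ actually computed from data via $\hat{\boldsymbol{K}}$. So the sampling procedure has a nonempty target. This is where Lemma~\ref{lem:identification_strategies} and Assumption~\ref{ass:persistence_excitation} genuinely enter, as you note in your last paragraph. Combined with Theorem~\ref{theo:admissible_set}, your direct argument yields this too, since the two sets coincide.
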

\begin{pf}
    By Assumptions~\ref{ass:mss},~\ref{ass:nash_trajectories}, and~\ref{ass:persistence_excitation}, the ground truth trajectories correspond to a linear feedback Nash equilibrium $\boldsymbol{K}^\star$ induced by the ground truth parameters $\boldsymbol{\theta}^\star$. By Lemmas~\ref{lem:coupled_Riccati_necessary_sufficient} and~\ref{lem:kernel_Riccati}, this implies $\boldsymbol{\theta}_i^\star \in \ker(\boldsymbol{M}_i)$, $\forall i \in \mathcal{I}$, and by Assumption~\ref{ass:admissible_set}, $\boldsymbol{\theta}_i^\star \in \mathcal{A}_i$. Hence, $\boldsymbol{\theta}^\star \in \boldsymbol{\Theta}$, and thus $\boldsymbol{\Theta} \neq \varnothing$.\qed
\end{pf}
Theorem~\ref{theo:admissible_set} solves Problem~\ref{prob:isdg} by determining the solution set $\boldsymbol{\Theta}$ in \eqref{eq:admissible_set} according to Definition~\ref{def:solution_set}, i.e., the set of all joint cost function parameters that induce $\boldsymbol{K}^\star$ as a linear feedback Nash equilibrium, which by Corollary~\ref{cor:set_not_empty} is guaranteed to be non-empty. Since $\boldsymbol{\Theta}$ is given explicitly as an intersection of a kernel and an admissible set for each player, any $\hat{\boldsymbol{\theta}} \in \boldsymbol{\Theta}$ can be obtained directly by sampling from $\boldsymbol{\Theta}$, without further optimization or search. This establishes an inverse method that identifies all cost function parameters of an infinite-horizon \gls{lq} differential game with state- and control-dependent noise consistent with observed equilibrium trajectories.
\section{Simulation Example}
To illustrate the proposed \gls{isdg} method, we consider a two-player scalar example system, allowing us to visualize the theoretical results of Theorem~\ref{theo:admissible_set} directly in the three-dimensional space of each player's cost function parameters. The dynamics of the example system are given by
\begin{equation}
\begin{aligned}
    dx(t)&=(x(t)+u_1(t)+u_2(t))dt\\
    &+(0.2x(t)+0.3u_1(t)+0.2u_2(t))d\omega(t),
\end{aligned}
\end{equation}
with initial state $x_0 = -1$. We first apply the policy iteration method from \cite{Han:26} with the cost function matrices $Q^\star_1=2$, $R^\star_{11}=1$, $R^\star_{12}=0.5$, $Q^\star_2=1.5$, $R^\star_{21}=0.5$, and $R^\star_{22}=1$ to calculate the linear feedback Nash equilibrium $\boldsymbol{K}^\star=(1.3515,1.2080)$. Next, we simulate the closed-loop system under $\boldsymbol{K}^\star$ to produce $D=20$ stochastic ground truth trajectories. To verify that $\boldsymbol{K}^\star$ is indeed a linear feedback Nash equilibrium, the deviations from the equilibrium conditions \eqref{eq:coupled_riccati} with \eqref{eq:control_strategy} for both players are combined into a residual $r$, yielding $r^\star=1.06 \times 10^{-12}$ for $\boldsymbol{K}^\star$. Moreover, we confirm algebraically that the resulting closed-loop system is mean-square stable, and the ground truth trajectories ensure persistence of excitation, such that Assumptions~\ref{ass:mss},~\ref{ass:admissible_set},~\ref{ass:nash_trajectories}, and~\ref{ass:persistence_excitation} hold.

We use the ground truth trajectories as the starting point for our \gls{isdg} method and obtain $\hat{\boldsymbol{K}}=(1.3515,1.2080)$ via \eqref{eq:ls_estimator}, which exactly matches $\boldsymbol{K}^\star$. Next, we use $\hat{\boldsymbol{K}}$ to construct $\boldsymbol{M}_i\in \mathbb{R}^{1\times3}$ as in \eqref{eq:matrix_M_i} for both players. The kernels of the resulting matrices are two-dimensional subspaces of $\mathbb{R}^3$ and can be parametrized as
\begin{equation}\label{eq:kernel_parameterization}
    \ker(\boldsymbol{M}_i) = \left\{ \alpha_i \boldsymbol{v}_{i,1} + \beta_i \boldsymbol{v}_{i,2} \;\middle|\; \alpha_i, \beta_i \in \mathbb{R} \right\},
\end{equation}
with
\begin{equation}
    \begin{aligned}
        \boldsymbol{v}_{1,1}&=[0.8392,0.4614,0.2880]^\top\\
        \boldsymbol{v}_{1,2}&=[-0.4486,0.2880,0.8461]^\top\\
        \boldsymbol{v}_{2,1}&=[-0.5730,0.7501,0.3302]^\top\\
        \boldsymbol{v}_{2,2}&=[0.7571,0.3302,0.5637]^\top.
    \end{aligned}
\end{equation}
To illustrate the player-wise solution sets defined by $\ker(\boldsymbol{M}_i) \cap \mathcal{A}_i$, we visualize in Figure~\ref{fig:kernels} the kernels $\ker(\boldsymbol{M}_i)$ for each player $i \in \{1,2\}$ in the respective parameter spaces $(Q_1,R_{11},R_{12})$ and $(Q_2,R_{21},R_{22})$. Following \eqref{eq:admissible_set} and \eqref{eq:kernel_parameterization}, each set is constructed numerically by sampling $\alpha_i,\beta_i \in [-5,5]$, which yields a finite representation of $\ker(\boldsymbol{M}_i)$. From these samples, the player-wise solution sets $\ker(\boldsymbol{M}_i)\cap\mathcal{A}_i$ are obtained by retaining only those parameter vectors that satisfy Assumption~\ref{ass:admissible_set}. The ground truth parameters lie within these player-wise solution sets, given by $\alpha_1^\star = 2.284$, $\beta_1^\star =-0.186$, $\alpha_2^\star = -0.154$, and $\beta_2^\star = 1.864$.
\begin{figure}
    \centering
       \begin{subfigure}{0.95\columnwidth}
        \centering
            \begin{tikzpicture}

\begin{axis}[
width=6.5cm,
xmin=-6,
xmax=6,
xlabel={$Q_{1}$},
ymin=-5,
ymax=5,
ylabel={$R_{11}$},
zmin=-5,
zmax=5,
zlabel={$R_{12}$},
view={380}{29},
xmajorgrids,
ymajorgrids,
zmajorgrids,
]

\addplot3[area legend, draw=none, fill=green!20!blue, fill opacity=0.5, forget plot]
table[row sep=crcr] {%
x	y	z\\
1.952765	3.746614	5.670061\\
-6.439019	-0.8671032	2.790551\\
-1.952765	-3.746614	-5.670061\\
6.439019	0.8671032	-2.790551\\
}--cycle;

\addplot3[area legend, draw=none, fill=black!20!red, fill opacity=0.8, forget plot]
table[row sep=crcr] {%
x	y	z\\
0	0	0\\
0.003158985	2.67474	5.001084\\
1.952765	3.746614	5.670061\\
4.943601	1.82694	0.0296533\\
}--cycle;

\addplot3[
    only marks,
    mark=*,
    mark size=2.0pt,
    mark options={fill=yellow, draw=yellow}
] coordinates {
    (2,1,0.5)
};

\end{axis}
\end{tikzpicture}%
            \caption{Player 1}
        \label{fig:p1}
    \end{subfigure}
    
    \vspace{1em}
    \begin{subfigure}{0.95\columnwidth}
        \centering
            \begin{tikzpicture}

\begin{axis}[
width=6.5cm,
xmin=-6,
xmax=6,
ymin=-5,
ymax=5,
zmin=-5,
zmax=5,
xlabel={$Q_{2}$},
ylabel={$R_{21}$},
zlabel={$R_{22}$},
view={380}{29},
xmajorgrids,
ymajorgrids,
zmajorgrids,
]

\addplot3[area legend, draw=none, fill=green!20!blue, fill opacity=0.5, forget plot]
table[row sep=crcr] {%
x	y	z\\
-0.92043   -5.40147   -4.469552\\
-6.650671   2.099108  -1.167187\\
0.92043     5.40147    4.469552\\
6.650671   -2.099108   1.167187\\
}--cycle;

\addplot3[area legend, draw=none, fill=black!20!red, fill opacity=0.8, forget plot]
table[row sep=crcr] {%
x	y	z\\
0           0           0\\
0.002720861 5.001187    3.786311\\
0.92043     5.40147     4.469552\\
5.029997    0.02228832  2.101189\\
}--cycle;

\addplot3[
    only marks,
    mark=*,
    mark size=2.0pt,
    mark options={fill=yellow, draw=yellow}
] coordinates {
    (1.5,0.5,1.0)
};

\end{axis}
\end{tikzpicture}
            \caption{Player 2}
        \label{fig:p2}
    \end{subfigure}
    \caption{The kernel of each matrix $\boldsymbol{M}_i$ forms a two-dimensional plane (blue and red). The player-wise solution sets $\ker(\boldsymbol{M}_i)\cap\mathcal{A}_i$ are shown in red. The ground truth parameters (yellow) lie within $\ker(\boldsymbol{M}_i)\cap\mathcal{A}_i$.}
    \label{fig:kernels}
\end{figure}
To validate Theorem~\ref{theo:admissible_set}, we construct the Cartesian product defining $\boldsymbol{\Theta}$ as in \eqref{eq:admissible_set}. For each player, $1000$ samples are drawn from $\ker(\boldsymbol{M}_i)\cap\mathcal{A}_i$, none of which coincides with the ground truth parameters. Each sampled parameter vector of player $i$ is then combined with all $1000$ samples of the other player, resulting in $10^6$ parameter tuples. For every such combination, admissibility is preserved by construction, and the residual $r$ is evaluated at the identified feedback strategy $\hat{\boldsymbol{K}}$. The maximal residual over all sampled combinations is $r_{\max} = 2.47 \times 10^{-12}$, confirming numerical consistency with the equilibrium conditions. Hence, all sampled admissible parameter combinations reproduce the equilibrium feedback strategies $\boldsymbol{K}^\star$. 
\begin{figure}
    \centering
    \input{figures/combined_mean_cov}
     \caption{Ground truth and identified state and control trajectories with mean and $\pm2\sigma$ confidence intervals for $D=20$ demonstrations.}
    \label{fig:trajectories}
\end{figure}
This ensures trajectory matching in the sense of Definition~\ref{def:trajectory_matching}, as illustrated in Figure~\ref{fig:trajectories}, where the means and corresponding $\pm2\sigma$ confidence intervals of ground truth and identified state and control trajectories coincide. This confirms numerically that the proposed characterization of $\boldsymbol{\Theta}$ resolves Problem~\ref{prob:isdg}.
\section{Conclusion}
We presented a method to solve the inverse problem for infinite-horizon \gls{lq} differential games with state- and control-dependent noise. The proposed approach yields an explicit characterization of all cost function parameter combinations across players that admit the identified linear feedback Nash equilibrium consistent with observed equilibrium trajectories. Numerical results confirm the theoretical findings and highlight the inherent ambiguity of the inverse problem.

\section*{Declaration of Generative AI and AI-assisted technologies in the writing process}
During the preparation of this work the authors used ChatGPT in order to perform language editing. After using this tool, the authors reviewed and edited the content as needed and take full responsibility for the content of the publication.

\bibliographystyle{plain}        
\bibliography{autosam}           
\end{document}